\documentclass[12pt]{article}
\usepackage[left=2.5cm,right=2.5cm,
top=3cm,bottom=3cm]{geometry}
\usepackage[T2A]{fontenc}
\usepackage[utf8]{inputenc}
\usepackage[english]{babel}
\usepackage{amsmath,amssymb,amsthm,mathtools,bm}
\usepackage{euscript,mathrsfs}
\usepackage{icomma}
\usepackage{graphicx}
\usepackage{wrapfig}
\usepackage[dvipsnames]{xcolor}
\usepackage{indentfirst}
\usepackage{amsbsy}
\usepackage{wasysym}
\usepackage{hyperref}
\usepackage{cancel}
\usepackage{verbatim}
\usepackage{empheq}
\usepackage{adjustbox}
\usepackage{authblk}

\usepackage{tikz-cd}
\usepackage{tikz}
\usetikzlibrary{positioning}
\usetikzlibrary{calc}

\definecolor{mylightred}{RGB}{211,79,73}
\definecolor{mydarkred}{RGB}{199,44,38}
\definecolor{mylightgreen}{RGB}{78,153,67}
\definecolor{mydarkgreen}{RGB}{43,129,33}
\definecolor{mylightpurple}{RGB}{150,107,178}
\definecolor{mydarkpurple}{RGB}{126,78,160}
\definecolor{mylightblue}{RGB}{49,101,205}
\definecolor{mydarkblue}{RGB}{20,92,205}

\tikzset{
  juliadot/.style args={#1,#2}{shape=circle,line width=0.03ex,minimum width=0.4ex,fill=#1,draw=#2}
}

\newcommand\julialetter[1]{{\strut\fontfamily{cmss}\bfseries\selectfont{#1}}}

\DeclareRobustCommand\julia{%
\begin{tikzpicture}[baseline=0mm, every node/.style={inner sep=0mm, outer sep=0mm}]
\node[anchor=base]        (j) at (0,0) {\julialetter{\j}};
\node[anchor=base, right=0ex of j] (u) {\julialetter{u}};
\node[anchor=base, right=0ex of u] (l) {\julialetter{l}};
\node[anchor=base, right=0ex of l] (i) {\julialetter{\i}};
\node[anchor=base, right=0ex of i] (a) {\julialetter{a}};
\path let \p1 = (j) in node[juliadot={mylightblue,mydarkblue}] (bluedot) at (\x1+0.02ex,1.4ex) {};
\path let \p1 = (i) in node[juliadot={mylightred,mydarkred}] (reddot) at (\x1,1.4ex) {};
\path let \p1 = (reddot) in node[juliadot={mylightpurple,mydarkpurple}] (purpledot) at (\x1+0.5ex,\y1) {};
\path let \p1 = (reddot) in node[juliadot={mylightgreen,mydarkgreen}] (greendot) at (\x1+0.25ex,\y1+0.42ex) {};
\end{tikzpicture}%
}

\definecolor{urlcolor}{HTML}{120099}
\definecolor{linkcolor}{HTML}{005F5F}
\hypersetup{pdfstartview=FitH,  linkcolor=linkcolor,urlcolor=urlcolor, colorlinks=true,citecolor=blue}

\theoremstyle{plain}
\newtheorem{theorem}{Theorem}
\newtheorem*{theorem*}{Theorem}
\newtheorem{proposition}{Proposition}
\newtheorem*{proposition*}{Proposition}

\renewcommand{\phi}{\varphi}
\renewcommand{\epsilon}{\varepsilon}
\DeclareMathOperator{\im}{\mathrm{Im}}

\DeclareMathOperator{\tr}{\mathrm{tr}}
\def\wt#1{\widetilde#1}

\numberwithin{equation}{section}
\bibliographystyle{unsrturl}

\begin{document}

\title{\textbf{Phase-locking in dynamical systems and quantum mechanics}}
\author[1,2,6]{\normalsize Artem Alexandrov\thanks{aleksandrov.aa@phystech.edu}}
\author[4,5,6]{\normalsize Alexey Glutsyuk\thanks{aglutsyu@ens-lyon.fr}}
\author[1,3]{\normalsize Alexander Gorsky}
\affil[1]{\small Institute for Information Transmission Problems, Moscow, 127994, Russia}
\affil[2]{\small Moscow Institute of Physics and Technology, Dolgoprudny 141700, Russia}
\affil[3]{\small Laboratory of Complex Networks, Center for Neurophysics and Neuromorphic Technologies, Moscow, Russia}
\affil[4]{\small Higher School of Modern Mathematics, Moscow Institute of Physics and Technology, Dolgoprudny 141700, Russia}
\affil[5]{\small CNRS UMR 5669 (UMPA, ENS de Lyon), Lyon, France}
\affil[6]{\small HSE University, Moscow, Russia}

\date{}

\maketitle

\begin{abstract}
    In this study, we discuss the Pr\"{u}fer transform that connects the dynamical system on the torus and the Hill equation, which is interpreted as either the equation of motion for the parametric oscillator or the Schr\"{o}dinger equation with periodic potential. The structure of phase-locking domains in the dynamical system on torus is mapped into the band-gap structure of the Hill equation. For the parametric oscillator, we provide the relation between the non-adiabatic Hannay angle and the Poincar\'{e} rotation number of the corresponding dynamical system. In terms of quantum mechanics, the integer rotation number is connected to the quantization number via the Milne quantization approach and exact WKB. Using recent results concerning the exact WKB approach in quantum mechanics, we discuss the possible non-perturbative effects in the dynamical systems on the torus and for parametric oscillator. The semiclassical WKB is interpreted in the framework of a slow-fast dynamical system. The link between the classification of the coadjoint Virasoro orbits and the Hill equation yields a classification of the phase-locking domains in the parameter space in terms of the classification of Virasoro orbits.  Our picture is supported by numerical simulations for the model of the Josephson junction and Mathieu equation.
\end{abstract}

\newpage
\tableofcontents

\section{Introduction}\label{sec:Introduction}

The classical phase-locking phenomenon is a well-known fundamental property of nonlinear dynamical systems \cite{pikovsky2001synchronization}. It is familiar for the overdamped Josephson junction (JJ) driven by periodic external current (which is the famous RSJ model \cite{mccumber1968effect}), systems of microparticles~\cite{mishra2025phase}, superconducting nanowires~\cite{dinsmore2008}, charge density waves~\cite{gruner1985charge}, skyrmions~\cite{reichhardt2015shapiro} and other nonlinear systems. Domains in the parameter space of the system where phase-locking takes place are sometimes called Arnold tongues. The simplest possible example of systems where phase-locking occurs is the nonlinear dynamical systems on a two-dimensional torus. The central object in the description of these systems is the Poincar\'{e} rotation number. Another common object is the Cantor staircase, which is often called the Devil's staircase, being nothing more than the plot of rotation number as a function of certain system parameters, and steps at the staircase correspond to the phase-locking domains.

The familiar example of the staircase structure is provided by the integer quantum Hall effect (QHE) when the Hall conductivity $\sigma_{xy}$ takes integer values on the plateau as a function of the external magnetic field. Similar plateaus at rational numbers occur for the fractional QHE as well. In this case, the quantization phenomenon on the staircase plateau can be interpreted as the quantization of the Berry curvature flux through the domain of the parameter space. For the QHE on the lattice, the quasi-momenta $(k_x,k_y)$ play the role of the parameters that define the Berry curvature flux through the Brillouin zone \cite{thouless1982quantized}. In the continuum case of QHE on the Riemann surface the corresponding parameters are identified as the fluxes through the noncontractable cycles or the moduli of the complex structures \cite{avron1985quantization,avron1995viscosity}. The staircase picture for several examples has been interpreted as the specific two-coupling renormalization group (RG) flow pattern when the non-perturbative renormalization of the couplings dominates the RG flow \cite{flack2023generalized}. In that case, the first-order RG equations are mapped to the second-order Schr\"{o}dinger equation with a peculiar potential with a double periodic structure.

The simplest possible example of the dynamical system where the phase-locking takes place is the phase oscillator forced by an external periodic force. The phase space of such system is the 2D torus and, under mild assumptions, instead of flow on torus, one can consider the standard circle map. This situation is typical because dynamical systems on 2D torus can be investigated by the analysis of circle diffeomorphisms. In the case of a phase oscillator with periodic force, phase-locking occurs if the rotation number is rational.

As noticed in the book \cite{arn}, a randomly sampled circle diffeomorphism is usually close to rotation, so it has an irrational rotation number. This implies that Arnold tongues are usually assumed to be quite small areas in parameter space. However, there is a class of dynamical systems on torus, where Arnold tongues are large, i.e. phase-locking takes place for quite large domains of parameter space. The ultimate feature of systems in this class is that phase-locking occurs \emph{only} for \emph{integer} rotation numbers. In the following, we will call such systems as of M\"{o}bius type. This phenomenon is commonly called \emph{rotation number quantization} \cite{buchstaber2010rotation}. It is amusing to note that for the RSJ model quantization of rotation number is indeed quantization because the rotation number in this system up to numerical factor (that contains Planck constant) coincides with the time-averaged voltage on JJ. The related Cantor staircase is known in the RSJ model as \emph{Shapiro steps}~\cite{shapiro1963josephson}. In~\cite{renne1974some,waldram1982alternative} the absence of fractional Shapiro steps has been explained. For the RSJ model the structure of phase-locking domains can be investigated by analysis of the Heun equation and its monodromy, which was done in papers \cite{buchstaber2017monodromy,glutsyuk2014adjacency,glutsyuk2019constrictions}. The construction of solutions for the RSJ model is discussed in \cite{tertychniy2006,buchstaber2013explicit,buchstaber2015holomorphic}. In addition, consideration of the RSJ model with two incommensurable periods has been made in \cite{bondeson1985}. Note that systems of M\"{o}bius type on 2D torus also appear in the investigation of bicycle dynamics \cite{bizyaev2017hess}, but without analysis of the phase-locking phenomenon.

In this study, we take advantage of the Pr\"{u}fer transform, which brings the second-order differential equation into the system of two first-order equations. This provides the correspondence between the Hill equation and the dynamical system on the 2D torus and vice versa. In other words, the potential in the corresponding Hill operator is defined uniquely by the flow on a two-dimensional torus. For the RSJ model, it has been discussed in \cite{renne1974some}, while the more general case was discussed in \cite{johnson1982rotation}. The main feature of this mapping is that it provides the connection between the Poincar\'{e} rotation number and the properties of the Hill equation solutions.

We explore the relations between dynamical systems on the torus of M\"{o}bius type, the equation of motion for the classical parametric oscillator and the band structure in quantum mechanics with periodic potential. We start our narration with a brief review of dynamical systems on the 2D torus and recall the essential properties of systems with rotation number quantization effect. Using the previously obtained results, we recall how the rotation number quantization is connected to the properties of Hill equation. We apply this connection to the classical parametric oscillator and find the relation between the Hannay angle of the classical Hamiltonian system, which is the classical counterpart of the quantum Berry phase, and the rotation number. The rotation number quantization will be linked to the quantization condition using the Milne approach. The semiclassical approximation in quantum mechanics will be linked with the slow-fast dynamics in the classical systems.

Since the Hill operator governs the classification of the coadjoint Virasoro orbits, we use this classification for the analysis of the phase-locking domains in the RSJ model. It turns out that the so-called special Virasoro orbits specified by the stabilizer vector fields with multiple zeros are relevant in this case. We identify the relation of the Hannay angle to the parameters of the coadjoint orbits. 

We show that slow-fast dynamics in the dynamical system on the torus can be mapped to the analysis of the Schr\"{o}dinger equation with the periodic potential.  The Schr\"{o}dinger equation suggests the application of recently developed methods to sum up non-perturbative contributions in dynamical systems. In these cases, the notion of the non-perturbative effect deserves explanation. In the parametric oscillator case, such non-perturbative effects can be attributed to the so-called complex-time instantons used a long time ago for the discussion of the particle in time-dependent fields \cite{brezin1970pair,popov1972pair}. On the other hand in the theory of dynamical systems we suggest the new interpretation of canards, the phenomenon when non-perturbative effects in the smallness parameter in the slow-fast system~\cite{diener1984canard,desroches2011canards} emerge. For the dynamical system on the torus, the presence of canards does not require additional parameters, which was shown and proven in~\cite{guckenheimer2001duck} and an investigation of their properties was carried out in~\cite{shchurov2010canard,schurov2017duck}. For the RSJ model, the slow-fast dynamics was analyzed in~\cite{kleptsyn2013josephson}. The interpretation in terms of an auxiliary Riemann surface allows us to get the non-perturbative contributions in the dynamical system in the leading order. In all cases, the summation of the non-perturbative effects involves the auxiliary Riemann surface with the additional ingredients. This Riemann surface is identified with the complexified slow manifold in the slow-fast system.

This study combines different concepts from the theory of dynamical systems, differential equations, and quantum mechanics. We tried our best to provide the reader with a detailed description of each object, starting from the definitions to computation of essential quantities. For the cohesive narrative, we recall important results that were obtained earlier. Despite the seeming incoherency, during the study, we use very similar ideas, and the monodromy of the Hill equation is one of our cornerstones. The goal of this study is to track the fate of Poincar\'{e} rotation number as one goes from classical mechanics to quantum mechanics.

The link between the dynamical systems on the torus, potentially interpreted as the RG flow, and quantum mechanics with periodic potentials seems to be fruitful for both. The new approaches developed for exact quantization in quantum mechanics provide new insights for dynamical systems, while familiar results for the Poincar\'{e} rotation number and results in bifurcation theory in a more general context could be of some use in the quantum-mechanical context.

The remainder of the paper is organized as follows. In~\autoref{sec:Torus-Dyn} we introduce basic definitions and provide the reader with knowledge of how to determine phase-locking domains when the rotation number is quantized. We also explicitly derive the connection between the Hill equation and the dynamical system on the torus with the examples of the RSJ model and the Mathieu equation. Using the mentioned correspondence, in~\autoref{sec:Hannay-Rho} we show the connection between the nonadiabatic Hannay angle in the classical parametric oscillator and the rotation number of the dynamical system on the torus. In~\autoref{sec:Virasoro} we explain the interpretation of the phase-locking phenomenon in dynamical systems in terms of the classification of the coadjoint Virasoro orbits. Then, \autoref{sec:QM-Rho} and \autoref{sec:Slow-Fast-WKB} are devoted to the relation of the rotation number and phase-locking phenomenon with quantum mechanics with periodic potentials. The quantization of rotation number is linked to the Milne quantization approach, while the semiclassical and exact WKB approaches are related to the slow-fast behavior of dynamical systems. In~\autoref{sec:Discussion}   we collect an extensive list of questions for further study, while the summary of the results can be found in~\autoref{sec:Conclusion}. The technical details used in the main body of the paper are clarified in the Appendices.

\section{Dynamical systems on torus}\label{sec:Torus-Dyn}

\subsection{Flows on the torus}

Consider an arbitrary flow on 2-torus $T^2=S^1\times S^1=\mathbb{R}^2_{\phi,t}\slash2\pi\mathbb{Z}^2$ given by a non-autonomous differential equation of the type 
\begin{equation}\label{eq:torus-dyn-sys}
    \dot{\phi}=\frac{d\phi}{dt}=f(\phi,t)
\end{equation}
where $f(\phi,t)$ is a smooth function $2\pi$-periodic in each variable. Consider an arbitrary solution $\phi(t)$ of equation~\eqref{eq:torus-dyn-sys}, it is uniquely defined by its initial condition $\phi(t_0)=\phi_0$. Note that if $\phi(t)$ is a solution, then $\phi(t+2\pi)$ is also a solution. Recall that the {\it rotation number of flow}~\eqref{eq:torus-dyn-sys} is the limit
\begin{equation*}
    \rho=\lim\limits_{T\rightarrow\infty}\frac{\phi(T)}{T}\in\mathbb{R}
\end{equation*}
which is known to exist and to be independent on the initial condition $(\phi_0,t_0)$, see \cite[p. 104]{arn}.
\begin{figure}
    \centering
    \includegraphics{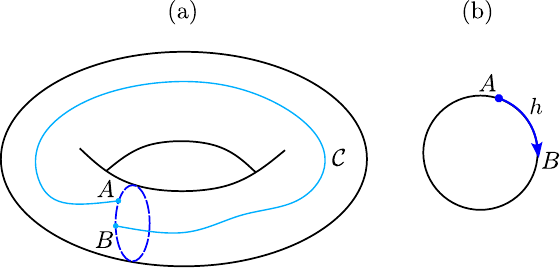}
    \caption{(a) Trajectory $\mathcal{C}$ on torus that start at point $A$ and after one period ends at point $B$, the dashed circle represents the Poincar\'{e} section, (b) Poincar\'{e} map $h$ that maps $A\in S^{1}$ to the point $B\in S^{1}$}
    \label{fig:dynamics-on-torus}
\end{figure}
Recall that the \textit{Poincar\'{e} map} of the flow~\eqref{eq:torus-dyn-sys} is the orientation-preserving diffeomorphism $h:S^1\to S^1$ of the circle $S^1=S^1_{\phi}\times\{0\}=\mathbb{R}\slash2\pi\mathbb{Z}$ defined as follows. Take a $\phi_0\in S^1$ and the solution $\phi(t)$ with the initial condition $\phi(0)=\phi_0$. By definition,
\begin{equation*}
    h(\phi_0)=\phi(2\pi).
\end{equation*}
Each orientation-preserving circle homeomorphism $h:S^1\to S^1$ 
can be written in the coordinate $\phi$ on the universal covering line $\mathbb{R}$ of the circle as
\begin{equation*}
    h(\phi)=\phi+g(\phi) \ \ g(\phi) \text{ is a continuous } 2\pi\text{-periodic function.}
\end{equation*}
The function $g(\phi)$ is uniquely defined up to addition of 
number $2\pi m$, $m\in\mathbb{Z}$. Recall that by definition, the {\it rotation number of a homeomorphism} $h$ introduced by H. Poincar\'{e}, see \cite[section 11]{arn}, is  equal to
\begin{equation*}
    \rho=\lim_{k\rightarrow+\infty}\frac{h^k(\phi_0)}{2\pi k}(\mathrm{mod}\,\,\mathbb{Z}).
\end{equation*}
The limit is known to be independent on $\phi_0$. As we add a number $2\pi m$ to $g(\phi)$, this results in adding $m\in\mathbb{Z}$ to $\rho$, which explains why $\rho$ is a well-defined modulo $\mathbb{Z}$. 

It is well known that the rotation number of the flow~\eqref{eq:torus-dyn-sys} taken modulo $\mathbb{Z}$ is equal to the rotation number of its Poincar\'{e} map.

\subsection{Hill equation as dynamical system}\label{subsec:DynTorusHill}

Hill equation is the second order differential equation with periodic coefficients of form
\begin{equation}
    \ddot{u} + g_1(\tau)\dot{u} + g_2(\tau)u=0,\label{hillg}
\end{equation}
where $g_1(\tau)$ and $g_2(\tau)$ are real-valued $2\pi$-periodic functions and time was rescaled in order to have $2\pi$-periodicity, dot denotes derivative with respect to $\tau$. It is known that the number of zeros $N(T)$ of a solution $u(\tau)$ of equation (\ref{hillg}) on the interval $[0,T)$ is related to the rotation number $\rho$ of certain dynamical system on torus \cite{johnson1982rotation},
\begin{equation}\label{eq:Hill-rho-connection}
    \rho = \pi\lim\limits_{T\rightarrow\infty}\frac{N(T)}{T}.
\end{equation}
The most straightforward way to prove the statement is the application of Pr\"{u}fer transformation~\cite{prufer1926neue}. For the sake of completeness, we provide the proof in \autoref{app:Hill-Rho}.

Pr\"{u}fer transform is nothing more than the standard radial projection of the Hill equation to the unit circle. The Hill equation be considered as a first order linear differential equation 
\begin{equation}\label{eq:Hill-LinearSys}
    \begin{cases} 
    \dot{u} = v\\
    \dot{v} = -g_2(\tau)u-g_1(\tau)v
    \end{cases}
\end{equation}
on the vector function $(u(\tau),\,v(\tau))=(u(\tau),\,\dot{u}(\tau))$. The non-autonomous flow map of system~\eqref{eq:Hill-LinearSys} is given by a fundamental matrix solution
\begin{equation*}
    \tau\mapsto U(\tau)\in \mathrm{GL}(2,\mathbb{R}): \ \text{ the columns of the matrix } U(\tau) \text{ are basic solutions of~\eqref{eq:Hill-LinearSys}.}
\end{equation*}
The standard radial projection to the unit circle, post-composed with symmetry
\begin{equation*}
    \mathbb{R}^2_{u,v}\to S^1_{u,v}:=\{|u|^2+|v|^2=1\}, \ \ (u,v)\mapsto(\frac u{\sqrt{u^2+v^2}}, -\frac v{\sqrt{u^2+v^2}}),
\end{equation*}
sends equation~\eqref{eq:Hill-LinearSys} to a differential equation (line field) on the torus $T^2=S^1_{u,-v}\times S^1_\tau$. The coordinate on the first circle is 
\begin{equation*}
    \xi=\arg(u-iv)=-\arctan\frac vu.
\end{equation*}
The differential equation on $T^2$ thus obtained by projection of linear system~\eqref{eq:Hill-LinearSys}is 
\begin{equation}\label{eq:Hill-Torus-DynSys}
    \frac{d\xi}{d\tau}=\frac{g_2(\tau)+1}2+\frac{g_2(\tau)-1}2\cos2\xi-\frac{g_1(\tau)}2\sin2\xi.
\end{equation}
The relation~\eqref{eq:Hill-rho-connection} can be represented in other forms, which are briefly listed in \cite{johnson1982rotation}. Since we will use some of these forms later on, we also list these relations explicitly. Keeping in mind that any second order differential equation can be transformed into its normal form, we can consider the following spectral problem,
\begin{equation}\label{eq:dyn-sys-spectral}
    \left\{-\frac{d^2}{d\tau^2} + q(\tau)\right\}u(\tau)=\lambda u(\tau),
\end{equation}
where $q(\tau)$ can be quasi-periodic function and $\lambda$ is the eigenvalue of the differential operator in the left hand side. For this equation, in addition to the relation~\eqref{eq:Hill-rho-connection} the rotation number can be defined as
\begin{equation}
    \rho(\lambda) = \lim\limits_{T\rightarrow\infty}\frac{\arg(\dot{u}(T;\lambda)+iu(T;\lambda))}{T}.
\end{equation}
By the Sturm's comparison lemma, the number of eigenvalues that are bounded by a fixed value $\lambda$ on the interval $[0,T]$, i.e. the density of states $\nu(T;\lambda)$ coincides up to addition of $\pm 1$ with the number of solution zeros $N(T;\lambda)$, so the rotation number is related to the density of states,
\begin{equation}
    \rho(\lambda)=\pi\lim\limits_{T\rightarrow\infty}\frac{\nu(T;\lambda)}{T}.
\end{equation}
The relation between rotation number and density of states in the spectral problem of type~\eqref{eq:dyn-sys-spectral} was studied intensively in~\cite{broer1995geometrical,broer2000resonance}.

\subsection{Josephson junction as illustrative example}

As we have already stated in the Introduction, the Josephson junction (JJ) is one of the simplest but important examples of dynamical system on the torus with rotation number quantization effect. The simplest possible theory that describes JJ is the so-called RCSJ model, described in~\cite{mccumber1968effect} and investigated in detail in~\cite{schon1990quantum,zaikin2019dissipative}. In this model, the phase difference between two superconductors is governed by the following equation,
\begin{equation}\label{eq:RSCJ-model}
    \frac{C\hbar}{2e}\ddot{\phi}+\frac{\hbar}{2eR}\dot{\phi}+I_c\sin\phi=I(t),
\end{equation}
where $I=I(t)$ is the external current, $R$ is the shunting resistance, $e$ is the electron charge. It is convenient to introduce dimensionless time, $\tau=t/\tau_J$, where $\tau_J=\hbar/(2eRI_c)$ is the so-called Josephson time. After this change of variables, the eq.~\eqref{eq:RSCJ-model} becomes
\begin{equation}
    \epsilon\frac{d^2\phi}{d\tau^2}+\frac{d\phi}{d\tau}+\sin\phi=\frac{I}{I_c},\quad \epsilon=2CR^2I_ce\hbar^{-1},
\end{equation}
where $\epsilon$ is dimensionless Stewart-McCumber parameter. The overdamped limit corresponds to $\epsilon\ll 1$, so we obtain RSJ model,
\begin{equation}\label{eq:RSJ model}
    \frac{d\phi}{d\tau}+\sin\phi=\frac{I}{I_c}.
\end{equation}
The time-averaged value of voltage on the junction is given by the Josephson relation,
\begin{equation}\label{eq:JJ-voltage}
    \langle V\rangle = \lim\limits_{T\rightarrow\infty}\frac{1}{T}\int_{0}^{T}dt\,V(t)=\lim\limits_{T\rightarrow\infty}\frac{1}{T}\int_{0}^{T}dt\left(\frac{\hbar}{2e}\frac{d\phi}{dt}\right)=\frac{\hbar}{2e}\lim\limits_{T\rightarrow\infty}\frac{\phi(T)-\phi(0)}{T}.
\end{equation}
We see that the voltage on JJ coincides with the definition of the rotation number up to the factor $\hbar/(2e)$. The current-voltage characteristic has a staircase structure, known as \emph{Shapiro steps}. The question was: does this current-voltage graph have only integer steps or non-integer steps are also possible? For an external current of the form $I/I_c=B+A\cos\omega t$ ($B$ is the DC component, $A\cos\omega t$ is the AC component), direct simulations show that the steps are integer only; see Fig.~\ref{fig:RSJ-voltage}.
\begin{figure}
    \centering
    \includegraphics[width=0.3\linewidth]{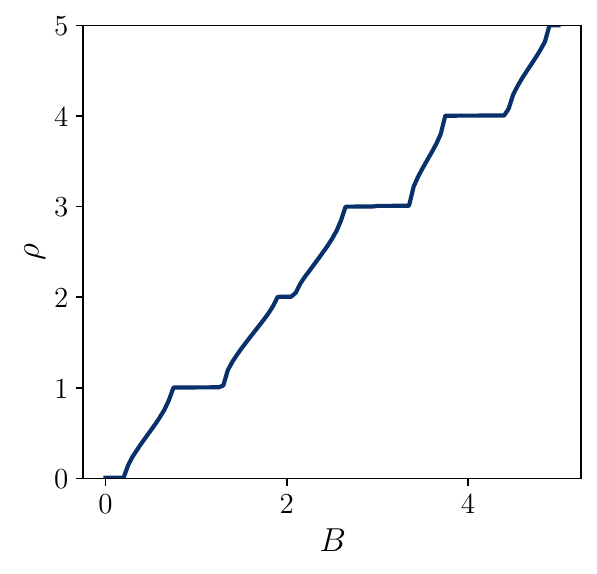}
    \caption{Rotation number $\rho$ as the function of $B$ for fixed value of $A$ for the RSJ model.}
    \label{fig:RSJ-voltage}
\end{figure}
The mathematically rigorous proof that the steps are integer only was made in \cite{buchstaber2010rotation} and is based on the quantization effect of the rotation number. However, the origin of Shapiro steps were derived many years before the first papers related to rotation number quantization. Of course, on the physical level of rigor, without accurate proofs, and so on. It seems that almost the first paper related to the existence of only integer Shapiro steps is \cite{renne1974some}. This paper contains an unnoticeable and underrated statement that connects the RSJ model with the Hill equation and then simplifies the demonstration of integer-only steps. After several changes of variables, the RSJ model equation can be written as the Hill equation,
\begin{equation}\label{eq:RSJ-Hill}
    \ddot{u}+V(t)u=0,
\end{equation}
with the following potential function,
\begin{equation}\label{eq:RSJ-Hill-potential}
    V(t)=-\frac{1}{4}+\frac{1}{4}\left(B+A\cos\omega t\right)^2-\frac{1}{4}\left(\frac{2A\omega^2\cos\omega t}{1+B+A\cos\omega t}+\frac{3A^2\omega^2\sin^2\omega t}{(1+B+A\cos\omega t)^2}\right),
\end{equation}
which has period $T=2\pi/\omega$. For the sake of completeness, we repeat all relevant derivations from paper \cite{renne1974some} in~\autoref{app:Hill-RSJ}. The Floquet theory allows us to investigate the corresponding Hill equation in great detail. The most interesting question for us is to reproduce the phase-locking areas in the RSJ model from the stability chart of the Hill equation. That can be done by computing the monodromy matrix.

Let $u_1(t)$ \& $u_2(t)$ be two independent solutions of the Hill equation~\eqref{eq:Hill}. The monodromy matrix is defined by the relation
\begin{equation}
    \left(u_1(t+T)\,\,\,u_2(t+T)\right)^T = M\,\left(u_1(t)\,\,\,u_2(t)\right)^T
\end{equation}
where the matrix $M\in\mathrm{SL}(2,\mathbb{R})$. According to Floquet theory, the Hill equation has stable (in sense that they are bounded) solutions if the monodromy matrix is elliptic and has unstable (unbounded) solutions if the monodromy matrix is hyperbolic. The notation ``elliptic'' and ``hyperbolic'' are related to the classification of $\mathrm{SL}(2,\mathbb{R})$ matrices by conjugacy classes. This classification is quite simple: $M\in\mathrm{SL}(2,\mathbb{R})$ is called elliptic if $|\tr M| < 2$, hyperbolic if $|\tr M| > 2$, and parabolic if $|\tr M| = 2$. All these matrices are characterized by two eigenvalues, $\lambda_1$ and $\lambda_2$, such that $\tr M = \lambda_1 + \lambda_2$ and $\det M = \lambda_1\lambda_2=1$. In Floquet theory the quantities $\lambda_{1,2}$ are related to the \emph{Floquet exponents} $\nu_{1,2}$ by the relation $\lambda_{1,2}=\exp(\nu_{1,2}T)$. The real parts of the Floquet exponents are nothing more than the Lyapunov exponents. Obviously, the monodromy matrix and, as a consequence, the Floquet exponents depend on parameters of the potential $V(t)$.

The most straightforward way to determine the stability and instability domains in the parameter space is by solving the Hill equation numerically followed by calculating the monodromy matrix and its trace. This can be achieved in many different ways, but we recall the implicit method described in the recent paper~\cite{chikmagalur2024implicit}. We follow a fairly standard approach: for each point $(B,A)$ we numerically solve the Hill equation with the same initial conditions and then compute the trace of a monodromy matrix.
\begin{figure}
    \centering
    \includegraphics[width=\linewidth]{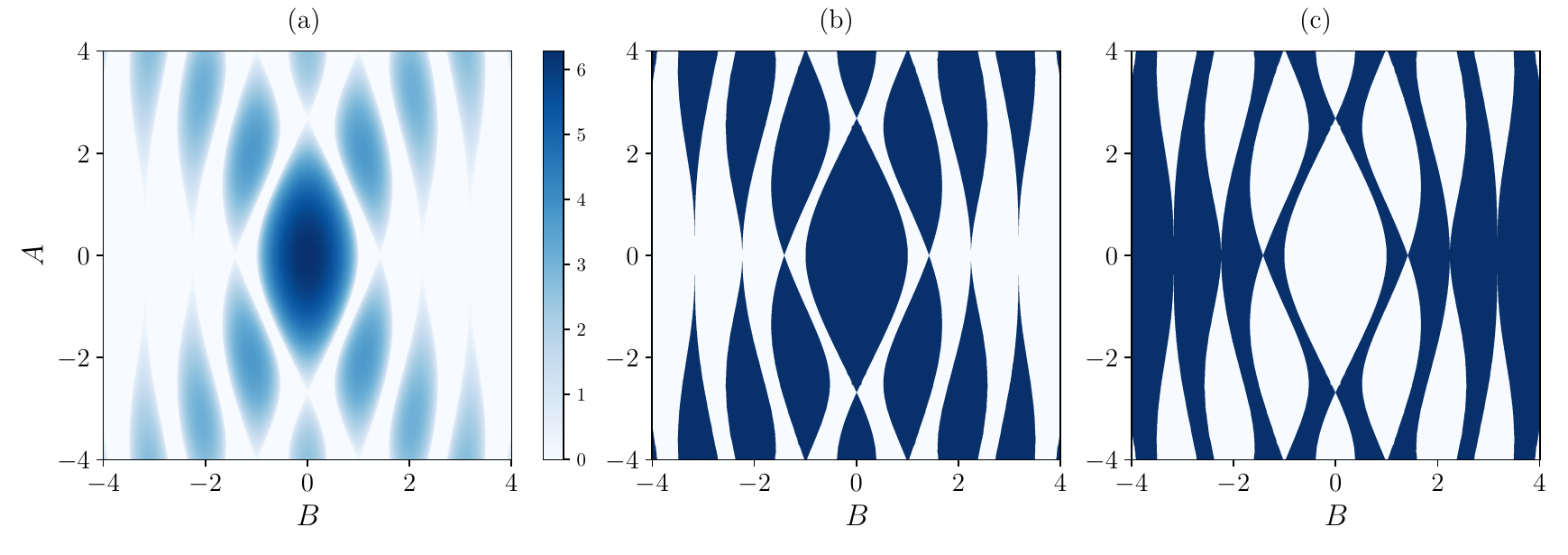}
    \caption{(a) largest Lyapunov exponent chart for the RSJ model, (b) phase-locking domains (shaded by dark blue) in the RSJ model, (c) band structure of Hill equation corresponding to RSJ model (stability zones shaded by dark blue)}
    \label{fig:RSJ-Phase-Locking}
\end{figure}
For the RSJ model, the potential $V(t)$ in eq.~\eqref{eq:RSJ-Hill-potential} has poles if $A\geq|B+1|$. This means that integration over the period should be performed with care. The described numerical computation gives us the following plot of the band structure for the Hill equation corresponding to RSJ model, see Fig.~\ref{fig:RSJ-Phase-Locking}. Comparing this plot with numerical simulations and experimental data in \cite{panghotra2020giant,karpov2008modeling}, one can see that the phase-locking domains correspond to the \emph{instability} domains of the Hill equation, whereas the \emph{stability} domains correspond to areas without phase-locking. This can be easily explained with the help of the theorem that relates the zeros of Hill equation solutions and $\mathrm{SL}(2,\mathbb{R})$ conjugacy classes. We provide an explanation of the correspondence between phase-locking domains and instability domains in~\autoref{app:Layp-Rho}.

\subsection{Mathieu equation as simplest example}

One more example of correspondence between Hill equation \& dynamical system on torus is the Mathieu equation,
\begin{equation}\label{eq:Mathieu-system}
    \ddot{u}+\left(B+A\cos\omega t\right)u=0.
\end{equation}
The corresponding dynamical system on torus is
\begin{equation}
    \dot{\xi} = \sin^2\xi + \left(B+A\cos\omega t\right)\cos^2\xi
\end{equation}
which can be rewritten in more convenient form by setting $\phi=2\xi$,
\begin{equation}
    \dot{\phi} = (B+A\cos\omega t+1)+(B+A\cos\omega t-1)\cos\phi.
\end{equation}
The study of the Mathieu equation band structure attracts a lot of attention from both physicists and mathematicians. The phase-locking domains and the staircase structure for the Mathieu equation are represented in Fig.~\ref{fig:Mathieu-Plots}. Since the very simple form of potential, the perturbation theory with respect to non-linearity is well developed and many analytical results are known. Further, we often illustrate our findings with the help of Mathieu equation and the RSJ model.
\begin{figure}
    \centering
    \includegraphics[width=\linewidth]{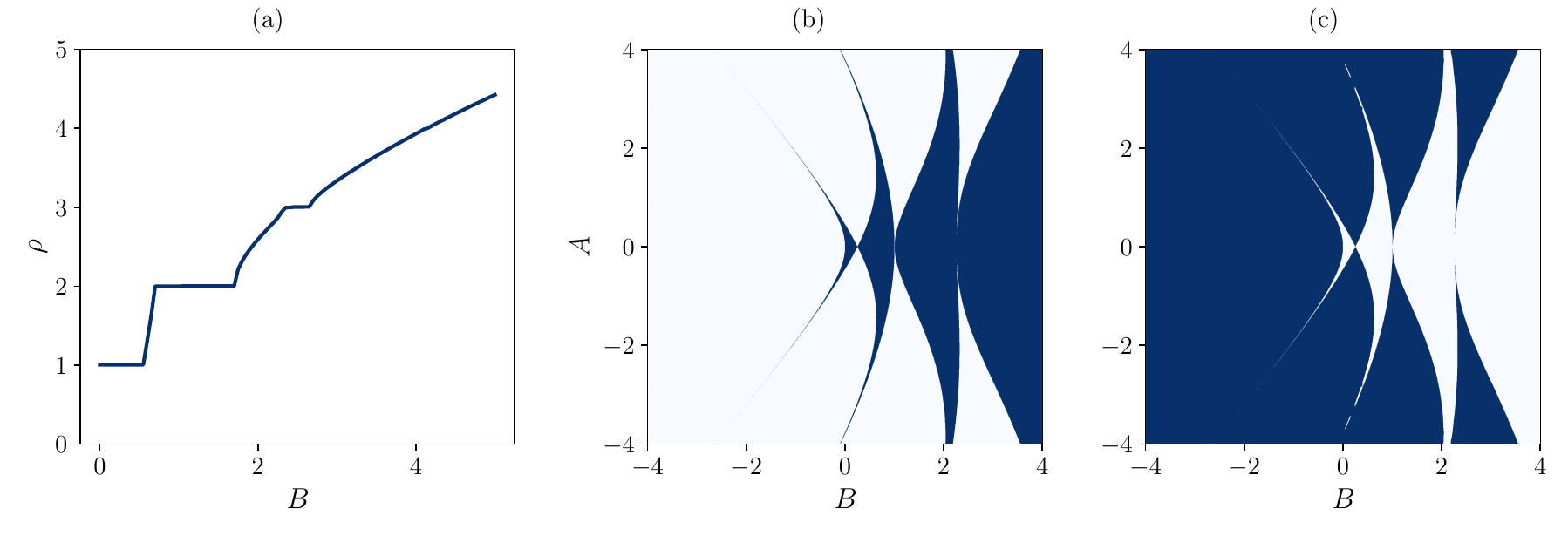}
    \caption{(a) rotation number quantization for Mathieu system ($A=2$, $\omega=1$) (b) band structure of Mathieu equation (stability zones shaded by dark blue), (c) phase-locking domains (shaded by dark blue) in the Mathieu system}
    \label{fig:Mathieu-Plots}
\end{figure}
In conclusion of this section, let us reinforce the correspondence between the Hill equation and the dynamical system on the torus in Tab.~\ref{tab:Hill-Dyn-corr}. From this table, one can notice that in the case of hyperbolic monodromy the relation between the Hill equation and the rotation number is very nice and simple. We argue that in the elliptic case the rotation number still has an interesting interpretation in terms of the non-adiabatic Hannay angle.

\begin{table}[h!]
\centering
\begin{tabular}{|c|c|c|}
\hline
$\mathrm{SL}(2,\mathbb{R})$ element & Hill equation            & Dynamical system on torus                 \\ \hline
---                                 & number of solution zeros $n$ & rotation number $\rho=2n$                           \\ \hline
hyperbolic                       & instability zone         & phase-locking domain, $\rho\in\mathbb{Z}_{\geq0}$ \\ \hline
elliptic                            & stability zone           & no phase-locking, $\rho\notin\mathbb{Z}$  \\ \hline
parabolic                           & boundary between zones   & boundary of phase-locking domain          \\ \hline
\end{tabular}
\caption{Correspondence between phase-locking domains and stability zones}
\label{tab:Hill-Dyn-corr}
\end{table}

\section{Hannay angle and rotation number}\label{sec:Hannay-Rho}

Let us remind that we have started from the Hill equation,
\begin{equation}\label{eq:Hill}
    \ddot{u}+V(t)u=0,
\end{equation}
where $V=V(t)$ is the periodic function of period $T$. As was shown above, projectivization onto the unit circle allows us to define the phase $\xi(t)$,
\begin{equation}\label{eq:RadialProjection}
    \xi(t) = \arg (u-i\dot{u}),
\end{equation}
where $u=u(t)$ is the solution of Hill equation. The phase $\phi(t)=2\xi(t)$ defines the dynamical system on torus,
\begin{equation}\label{eq:Hill-DynTorus}
    \dot{\phi}=(V(t)+1)+(V(t)-1)\cos\phi(t),
\end{equation}
which has the following rotation number,
\begin{equation}
    \rho = \lim\limits_{t\rightarrow\infty}\frac{\phi(t)-\phi(0)}{t}.
\end{equation}
From the previous section, one can easily identify the correspondence between the rotation number and the Hill equation solution. Here, our goal is to obtain a physical interpretation of the rotation number in the case of the elliptic Hill equation.

\subsection{Definition of non-adiabatic Hannay angle}

From a physical point of view, equation~\eqref{eq:Hill} corresponds to the time-dependent Hamiltonian system. Indeed, without loss of generality, we can associate the Hill equation~\eqref{eq:Hill} with the following Hamiltonian system,
\begin{equation}\label{eq:Hill-Hamiltonian}
    H(p,u;\,V(t)) = \frac{p^2}{2} + \frac{V(t)u^2}{2},
\end{equation}
where $p$ is the momentum, $u$ is the coordinate and we have emphasized that the Hamiltonian depends on the parameter $V(t)$ and it is the periodic function of time with period $T$. This is nothing more than the harmonic oscillator with time-dependent frequency, a.k.a. \emph{parametric oscillator}. The equation of motion for the coordinate $u$ coincides with the Hill equation~\eqref{eq:Hill}. For the Hamiltonian systems, one typically desires to find the action-angle variables, that make the Hamilton equations looks very simple,
\begin{equation}\label{eq:acition-angle}
    \dot{\theta} = \frac{\partial H(I,\theta)}{\partial I},\quad \dot{I}=0.
\end{equation}
When the Hamiltonian is time-dependent but this dependence is periodic, it was shown by Berry \& Hannay in~\cite{berry1988classical} that during the evolution over period the angle $\theta$ obtains the following change,
\begin{equation}
    \Delta\theta =  \Delta \theta_{d} + \theta_H.
\end{equation}
The first term is the so-called \emph{dynamical} phase, while the second term is the so-called \emph{non-adiabatic Hannay angle} \cite{aharonov1987phase,berry1988classical}. Of course, this quantity is well-defined only if the system trajectories are bounded. It is pretty reasonable to ask: How is the evolution of the angle variable $\theta$ related to the rotation number $\rho$ and the phase $\phi(t)$? To answer this question, we first give the definition of the Hannay angle, and in the next subsection we derive the explicit relation between the rotation number and the Hannay angle.

As a system with time-dependent Hamiltonian, the parametric oscillator is special in some sense. It was shown that this system has the \emph{exact} invariant \cite{lewis1968class}, which is given by
\begin{equation}
    I(t) = \frac{1}{2}\left[\frac{u^2}{w^2}+\left(u\dot{w}-w\dot{u}\right)^2\right]
\end{equation}
and it is called Lewis-Ermakov invariant. The function $w=w(t)$ solves the so-called Ermakov-Pinney (more correctly, Steen-Ermakov-Pinney equation and hereafter we denote this equation as EP),
\begin{equation}
    \ddot{w}+V(t)w = w^{-3}.
\end{equation}
Direct derivations show that $I$ is related to the Wronskian of two independent Hill equation~\eqref{eq:Hill} solutions $u_1$ \& $u_2$ as $I=W^2/2$. Existence of such invariant allows one to construct the pair of conjugated variables, $I$ and $\theta$, which can be expressed in terms of original variables $p=\dot{u}$ and $u$ as follows,
\begin{equation}\label{eq:Lewis-Action-Angle}
    I=\frac{1}{2}\left[\frac{u^2}{w^2}+\left(u\dot{w}-w\dot{u}\right)^2\right],\quad \theta = -\arctan\left[w^2\left(\frac{\dot{u}}{u}-\frac{\dot{w}}{w}\right)\right],\quad \{\theta,\,I\}=1.
\end{equation}
Therefore, despite the time dependence of the Hamiltonian, we can still find action-angle variables in explicit form. Then, we can compute the non-adiabatic Hannay angle~\cite{berry1988classical},
\begin{equation}
    \theta_H = \frac{1}{2\pi}\int_{0}^{2\pi}d\theta\int_{0}^{T}dt\frac{d}{dt}\left(\frac{\partial F_2}{\partial I}\right),
\end{equation}
where $F_2$ is generating function. This generating function $F_2$ is related to the angle variable as $\theta=\partial F_2/\partial I$, then
\begin{equation}\label{eq:Hannay-angle}
    \theta_H = \frac{2\pi}{2\pi}\int_{0}^{T}\frac{dt}{w(t)^2}=\int_{0}^{T}\frac{dt}{w(t)^2}
\end{equation}
The Hannay angle contains the information concerning  EP equation solution $w=w(t)$.

The last point to mention is that there exists a very simple answer to why the Lewis invariant exists. In the paper~\cite{eliezer1976note} it was shown that this invariant corresponds to the squared angular momentum in the auxiliary 2D problem. This problem describes the motion of the particle in the $V(t)$ potential in the 2D plane. The equation of motions are
\begin{equation*}
    \ddot{\bm{x}}+V(t)\bm{x}=0,
\end{equation*}
where $\bm{x}=(x,y)^T$. Using the ansatz $x(t)=w(t)\sin\theta(t)$, $y(t)=w(t)\sin\theta(t)$, it is straightforward to verify that the Lewis invariant looks like $I=L^2/2$, where $L$ is the angular momentum (see eq.~\eqref{eq:2D-plane} and~\eqref{eq:Lewis-as-Momentum} in the next subsection for intuition).

\subsection{Relation between Hannay angle and rotation number}

For given independent real solutions, say $u_1$ \& $u_2$, of Hill equation~\eqref{eq:Hill} the solution of EP equation can be written as
\begin{equation}\label{eq:EP-general-solution}
    w(t)=\sqrt{au_1^2+2bu_1u_2+cu_2^2},\quad ac-b^2=W^{-2},
\end{equation}
where $W$ is the Wronskian \cite{pinney1950nonlinear}. Here $a$, $b$, $c$ are arbitrary numbers satisfying the relation $ac-b^2=W^{-2}$. Let us consider the elliptic case and write down these independent solutions in form
\begin{equation}\label{eq:EP-elliptic}
    u_1(t)=R(t)\cos\theta(t),\quad u_2(t)=R(t)\sin\theta(t).
\end{equation}
Next, we introduce the complex-valued function $u_1(t)+iu_2(t)=R(t)\exp\{i\theta(t)\}$. Differentiating twice with respect to time, we find the equations for $R=R(t)$ and $\theta=\theta(t)$,
\begin{equation}\label{eq:2D-plane}
    \ddot{R}+V(t)R-R\dot{\theta}^2=0,\quad R\ddot{\theta}+2\dot{R}\dot{\theta}=0.
\end{equation}
We can integrate the second equation
\begin{equation}\label{eq:Lewis-as-Momentum}
    \dot{\theta}=WR^{-2},\quad c=\dot{\theta}(0)R^2(0)\neq 0.
\end{equation}
One can easily check that $W=u_1(0)\dot{u}_2(0)-u_2(0)\dot{u}_1(0)\neq 0$, that is, it is nothing more than the Wronskian of two solutions. Then, we substitute $\dot{\theta}=WR^{-2}$ into the first equation and find
\begin{equation}
    \ddot{R}+V(t)R=W^2R^{-3}.
\end{equation}
By rescaling $w(t)=|W|^{-1/2}R(t)$, we see that $w(t)$ obeys the EP equation,
\begin{equation}
    \ddot{w}+V(t)w=w^{-3}.
\end{equation}
Conversely, let $w=w(t)$ be a positive solution of the EP equation. Then, keeping in mind the relation between $R(t)$ and $w(t)$, we can write the following expression for the solution of the equation for $\theta=\theta(t)$ of system~\eqref{eq:Lewis-as-Momentum},
\begin{equation}
    \theta(t) = \int_{0}^{t}\frac{ds}{w(s)^2}.
\end{equation}
One has $w(t)^2=u_1(t)^2+u_2(t)^2$ for appropriately chosen independent solutions $u_1$, $u_2$ with unit Wronskian, which follows from (\ref{eq:EP-general-solution}).
When the Hill equation~\eqref{eq:Hill} has an elliptic monodromy, the function $w(t)$ cannot be vanished to zero. This relation leads to the following: the rotation number $\rho$ that corresponds to the dynamical system on the torus (induced by projectivization of the Hill equation into the unit circle) is expressed by the Hannay angle and given by
\begin{equation}
    \rho = 2\,\lim\limits_{t\rightarrow\infty}\frac{\theta(t)}{t}=\lim\limits_{t\rightarrow\infty}\frac{2}{t}\int_{0}^{t}\frac{ds}{u_1(s)^2+u_2(s)^2}=\lim\limits_{t\rightarrow\infty}\frac{2}{t}\frac{t\theta_H}{T}=\frac{2\theta_H}{T},
\end{equation}
where $u_1$ and $u_2$ are two independent solutions of the Hill equation with unit Wronskian; $\theta_H$ is the non-adiabatic Hannay angle, defined in~\eqref{eq:Hannay-angle} and $T$ is the period of potential $V(t)$.

It happens that in the case of hyperbolic and parabolic monodromies of the Hill equation there are very similar statements that connect the rotation number of the corresponding dynamical system on the torus and combination of the Hill equation solutions. These statements are consequences of the classification of coadjoint orbits of the Virasoro group based on the properties of the Hill operator.

\section{Hill operator and coadjoint Virasoro orbits}\label{sec:Virasoro}

This section is devoted to the discussion of the relation between rotation number and classification of coadjoint Virasoro orbits. As a first hint, we use the proposed connection between the rotation number and the Hannay angle.

\subsection{Classification of coadjoint orbits}

In previous sections, we have discussed the relation between Hill equation and the dynamical system on 2D torus. It was emphasized that the properties of these systems in the torus are directly related to circle diffeomorphisms $\mathrm{Diff}(S^1)$. Let us recall some notations concerning $\mathrm{Diff}(S^1)$. The Lie algebra $\mathrm{Diff}(S^1)$ involves the vector field $f(t)\frac{d}{dt}$, while a co-adjoint vector is quadratic differential $V(t)(dt)^2$ (without loss of generality here we assume that the potential has the period $2\pi$). The pairing in the Lie algebra is defined as
\begin{equation*}
    \left(f,\,V\right)=\int_{0}^{2\pi}dt\,f(t)V(t).
\end{equation*}
We will consider the Virasoro group, which is the universal central extension $\widehat{\mathrm{Diff}(S^1)}$ of $\mathrm{Diff}(S^1)$. Now, the
Lie algebra involves the vector fields together with the central element $c$ obeying the commutation relations
\begin{equation*}
    \left[L_n,L_m\right]= (m-n) L_{n+m} +\frac{c}{12}m^3\delta_{n+m,0},
\end{equation*}
where $L_m=ie^{imt}\frac{d}{dt}$. The coadjoint element is 2-differential with the central extension $q$, which is identified as $q=c$. The quadratic differential transforms under the action of Virasoro group as
\begin{equation}
    \delta V= 2f'V+fV' - \frac{f'''q}{24 \pi}, \qquad \delta q=0.
\end{equation}
The vector field $f(t)\frac{d}{dt}$ is called stabilizer if and only if the function $f(t)$ obeys the following equation,
\begin{equation}\label{eq:stab-Vir}
    \frac{1}{2}f'''(t)+2V(t)f'(t)+V'(t)f(t)=0,
\end{equation}
where $f'=df/dt$ and maintains the 2-differential $V$ intact.
The eq.~\eqref{eq:stab-Vir} can be written in another form,
\begin{equation}
    \frac{1}{2}f(t)f''(t) + V(t)f(t)^2-\frac{1}{4}(f'(t))^2=C,
\end{equation}
where $C$ is the constant. Setting $w(t)=\sqrt{f(t)}$, we obtain
\begin{equation}\label{eq:Vir-EP}
    w''(t)+V(t)w(t)=Cw(t)^{-3},
\end{equation}
which is nothing more than the EP equation and without loss of generality we can set hereafter $C=1$. The stabilizers $f\in\mathrm{Stab}_V$ are in the isotropy subgroup of the Hill operator $\partial_t^2+V(t)$. As we have already mentioned, any solution $w(t)$ of eq.~\eqref{eq:Vir-EP} can be constructed by finding two independent solutions, say $u_1(t)$ and $u_2(t)$, of the Hill equation. This implies that the stabilizer can also be expressed in terms of the combination of two solutions $u_1(t)$ and $u_2(t)$.

The classification of coadjoint orbits is closely related to the classification of the $\mathrm{SL}(2,\mathbb{R})$ group by conjugacy classes, which is discussed in detail in \cite{lazutkin1975normal,kirillov1981orbits,witten1988coadjoint,balog1998coadjoint}. The main statement is that each coadjoint orbit is characterized by two quantities, $\Delta$ and $n$ (we clarify their definitions and properties further).

Since the coadjoint orbits of the Virasoro group are given by $\mathrm{Diff}(S^1)/\mathrm{Stab}_V$, they can be classified by the vector field $f(t)$, which obeys the eq.~\eqref{eq:stab-Vir}. Based on the conjugacy classes of $\mathrm{SL}(2,\mathbb{R})$, the stabilizer $f(t)$ can be represented via the following combinations of two independent real solutions of Hill equation $u_1(t)$ and $u_2(t)$ with unit Wronskian, $W=1$,
\begin{equation*}
    f_{\text{ell}}(t)=u_1(t)^2+u_2(t)^2, \quad f_{\text{hyp}}(t)=2u_1(t)u_2(t),\quad f_{\text{par}}(t)=u_1(t)^2\,\text{or}\,u_2(t)^2.
\end{equation*}
Each coadjoint orbit of the Virasoro group can be classified by the number of zeros $n$ of the stabilizer vector field $f(t)$. There exists the additional invariant $\Delta$, defined by the relation
\begin{equation}
    \Delta = \int_{0}^{2\pi}\frac{ds}{f(s)}.
\end{equation}
So, the complete classification of the coadjoint orbits of the Virasoro group can be given in terms of two quantities, $n$ and $\Delta$. This full classification reads as
\begin{equation}
    \frac{\mathrm{Diff}(S^1)}{S^1},\qquad \frac{\mathrm{Diff}(S^1)}{\mathrm{SL}^n(2,\mathbb{R})}, \qquad T_{\Delta,n}, \qquad T_{\pm,n}
\end{equation}
where two parameters of the orbits are defined above. The details of the classification can be found, for example, in \cite{blau2024}. 

It is useful to provide the so-called \emph{representative} Hill potential $V_0(t)$ for each class of the Virasoro coadjoint orbit since that an arbitrary Hill potential $V(t)$ for a given class can be obtained from the representative potential $V_0(t)$ by the action $F\in\mathrm{Diff}(S^1)$,
\begin{equation}
    V(t) = V_0(t)\left(F'(t)\right)^2+\frac{1}{2}S(F(t)),\quad S(F)=\frac{F'''(t)}{F'(t)}-\frac{3}{2}\left(\frac{F''(t)}{F'(t)}\right)^2.
\end{equation}
For the simplest orbits of type $\mathrm{Diff}(S^1)/S^1$ the representative Hill potential looks like $V_0(t)=(\alpha+n)^2/4$, where $\alpha\in\mathbb{R}$ and $n\in\mathbb{Z}$. For $\alpha=0$, the potential is simply $V_0(t)=n^2/4$ and the stabilizer is generated by the vector fields $L_n$, $L_0$, and $L_{-n}$ on $S^1$, which correspond to orbits $\mathrm{Diff}(S^1)/\mathrm{SL}^n(2,\mathbb{R})$. 

There are special Virasoro orbits that correspond to vector fields with zeros that are deformations of $\mathrm{Diff}(S^1)/\mathrm{SL}^n(2,\mathbb{R})$. For the hyperbolic monodromy of Hill equation and the winding number $n$ these orbits are denoted by $T_{\Delta,n}$. They have stabilizer of form
\begin{equation*}
    f(t) = \frac{\sin nt}{n+\Delta \sin nt}
\end{equation*}
with $2n$ single zeros and typical diffeomorphism $F_{\Delta,n}(t)=e^{\Delta t}\tan(nt/2)$. Their non-constant representative Hill potential is
\begin{equation}
    V_0(t)= -\frac{\Delta^2}{4} -\frac{1}{2} \frac{n(n^2-\Delta^2)}{n+\Delta\sin(nt)} + \frac{3}{4} \frac{n^2(n^2-\Delta^2)}{(n+\Delta\sin(nt))^2}.
\end{equation}
The case of parabolic monodromy with $n$ number of zeros corresponds to $T_{\pm,n}$ orbits with stabilizer
\begin{equation}
    f(t)=\frac{2\cos^2(nt/2)}{n\pm \pi^{-1}\cos^2(nt/2)},
\end{equation}
which has $2n$ double zeros and typical diffeomorphism $F_{\pm,n}(t)=\pm(t/2\pi)+\tanh(nt/2)$. The representative Hill potential is given by
\begin{equation}
    V_0(t) = \frac{n^3}{8}\left\{\frac{6(n\pm \pi^{-1})}{(n\pm \pi^{-1}\cos^2(nt/2))^2}-\frac{8}{(n\pm \pi^{-1}\cos^2(nt/2))^2}\right\}.
\end{equation}

\subsection{Coadjoint orbits in RSJ model}

Consider the potential of the RSJ model $V(t)$ with $\omega=1$ (without loss of generality), which is given by~\eqref{eq:RSJ-Hill-potential}. For $A=0$, the potential becomes simply
\begin{equation*}
    V(t) = \frac{B^2-1}{4}.
\end{equation*}
The condition $B^2-1=n^2$, $n\in\mathbb{Z}$ determines the roots of Arnold tongues and tells that this potential lies on the orbit $T_{0,n}$. The interior of the phase-locking domains parameterized by two parameters $A$ and $B$. The corresponding Hill equation monodromy is hyperbolic. Hence the potential $V(t)$ in this domain lies on the $T_{\Delta,n}$ coadjoint orbit. On the boundary of phase-locking domains, the monodromy is parabolic, so the potential $V(t)$ lies on the $T_{\pm,n}$ coadjoint orbit. Between the phase-locking domains, the monodromy is elliptic. The corresponding the stabilizer does not have zeros. This means that this potential lies on the coadjoint orbit of type $T_{\alpha,0}$, $\alpha\in\mathbb{R}$.

Finally, the structure of phase-locking domains in the RSJ model contains quite interesting features: There exist points in the $(B,A)$ plane, where the Arnold tongues shrink to the point. These points are called \emph{constrictions}, and their properties were investigated in~\cite{glutsyuk2014adjacency}. For an arbitrary $\omega>0$, all constrictions are located in lines with $B=\omega k$, where $k$ is an integer, which is equal to the rotation number. For large values of $A$, the boundaries of Arnold tongues can be approximated by the Bessel functions $J_k(A)$, which is shown in \cite{klimenko2013asymptotic}. Using this fact, one can show that the period average of $V(t)$ in~\eqref{eq:RSJ-Hill} in $A\rightarrow\infty$ limit can be estimated by $j_{k,l}^2$, where $j_{k,l}$ is the $l$-th zero of $J_{k}(A)$. In \cite{buchstaber2017monodromy,kleptsyn2013josephson} it was shown that constrictions are characterized by degenerate (trivial) monodromy, they correspond to $T_{0,n}$ points.

Having stated the relation between coadjoint orbits and Hill potentials, we need to compute two quantities, $n$ and $\Delta$. For the number of zeros, we already know that it is related to the rotation number. The quantity $\Delta$ requires more attention. The crucial fact is that the eigenvalues of the monodromy matrix in the elliptic and hyperbolic cases can be expressed by the quantity $\Delta$ \cite{unterberger2010}. For the elliptic case, the eigenvalues $\lambda_{\pm}$ of the monodromy matrix are given by
\begin{equation}
    \lambda_{\pm} = \exp\left(\pm i\Delta\right),
\end{equation}
whereas for the hyperbolic case the eigenvalues are
\begin{equation}
    \lambda_{\pm} = \exp\left(\pm \Delta\right).
\end{equation}
For the sake of completeness, we provide the proof of these statements in \autoref{app:Inv-Virasoro}. As we have stated above, in the elliptic case, the irrational rotation number $\rho$ is given by the integral. Comparing this expression with $\Delta$, we conclude that the rotation number is related to eigenvalues of monodromy matrix,
\begin{equation}\label{eq:rotation-number-Delta}
    \rho = \frac{1}{2\pi i}\ln\left(\frac{\lambda_{+}}{\lambda_{-}}\right)=\frac{\Delta}{\pi}=\frac{1}{\pi}\int_{0}^{2\pi}\frac{dt}{u_1(t)^2+u_2(t)^2},
\end{equation}
which is consistent with consideration in \cite{buchstaber2017monodromy} (Proposition 5.6). Now we can improve this statement. First, the non-adiabatic Hannay angle is nothing more than the doubled rotation angle that corresponds to the elliptic monodromy matrix. In fact, in the elliptic case, the eigenvalues of the monodromy matrix are $\lambda_{\pm}=e^{\pm i\alpha}$. Comparing then the equation~\eqref{eq:Hannay-angle} and the expression for the invariant $\Delta$, we conclude that $\Delta=\alpha=\theta_H$, therefore $\rho=\alpha/\pi$ and Poincar\'{e} map corresponds to the rotation by the angle $2\alpha$, as should be.

Note that the structure of invariant combinations is directly related to the solutions of the EP equation. As we have mentioned earlier, the most general solution of the EP equation is given by~\eqref{eq:EP-general-solution}. From the structure of invariants, we see that the elliptic case corresponds to $B=0$, so $ac=W^{-2}$. So, we normalize Wronskian and then set $a=c=1$, so the solution of the EP equation becomes $w=u_1^2+u_2^2$ (cf. eq.~\eqref{eq:EP-elliptic}). In the hyperbolic case, we see that $a=c=0$ and $b=i$, so the solution is given by $w=2iu_1u_2$.

Using the RSJ model described by the Hill equation~\eqref{eq:RSJ-Hill} as an illustrative example, we conclude this section with the following table; see Tab.~\ref{tab:Virasoro-Rho}.
\begin{table}[h!]
\centering
\begin{tabular}{|c|c|c|}
\hline
Coadjoint orbit type & Dynamical system on torus       & Hill equation monodromy \\ \hline
$T_{\Delta,n}$       & phase-locking, $\rho=2n$         & hyperbolic              \\ \hline
$T_{0,n}$            & roots of Arnold tongues         & degenerate                 \\ \hline
$T_{\pm,n}$          & boundaries of Arnold tongues    & parabolic               \\ \hline
$T_{\alpha,0}$       & no phase-locking, $\rho=\alpha/\pi$ & elliptic                \\ \hline
\end{tabular}
\caption{Correspondence between Virasoro coadjoint orbits and the phase-locking}
\label{tab:Virasoro-Rho}
\end{table}

\subsection{Comments on the flows in the space of orbits}

Having identified the explicit relation between the geometry of the phase-locking in the RSJ model and the coadjoint orbits classification, we can make a few preliminary comments concerning the flows in the $(B,A)$-plane in the space of Virasoro orbits. There are natural flows along the orbits at fixed $(\Delta,n)$, as well as flows with varying $\Delta$ or $n$. Among the simplest flows are 
\begin{enumerate}
\setlength{\itemsep}{0pt}
\setlength{\parskip}{0pt}
    \item $A=0$ with varying $B$. This motion corresponds to the flow between the orbits
    \begin{equation*}
        T_{0,n} \rightarrow T_{-,n} \rightarrow T_{\alpha,0} \rightarrow T_{+,n+1} \rightarrow T_{0,n+1}
    \end{equation*}
    \item $A=\mathrm{const}$ with varying $B$. This motion corresponds to the flow between the orbits,
    \begin{equation*}
        T_{\Delta,n} \rightarrow T_{-,n} \rightarrow T_{\alpha,0} \rightarrow T_{+,n+1} \rightarrow T_{\Delta,n+1}
    \end{equation*}
    These two types of flow describe the structure of the Cantor staircase in the model. Note that the flow between the steps of the staircase corresponds to the elliptic orbit $T_{\alpha,0}$. This observation could be useful for the determination of the CFT for the flow between plateaus in the IQHE.
    
    \item $B=n$, $n\in\mathbb{Z}$ and varying $A$, 
    \begin{equation*}
        T_{0,n} \rightarrow T_{\Delta,n} \rightarrow T_{0,n} \rightarrow T_{\Delta,n} \rightarrow T_{0,n}
    \end{equation*}

    \item $B=\mathrm{const}\notin\mathbb{Z}$ and varying A. In a generic situation, the flow looks as
    \begin{equation*}
        T_{\alpha,0} \rightarrow T_{\pm,n} \rightarrow T_{\Delta,n} \rightarrow T_{\pm,n} \rightarrow T_{\alpha,0}
    \end{equation*}

\end{enumerate}
The flows along the orbits can be described in more quantitative terms. To this aim recall that Virasoro coadjoint orbits are symplectic manifolds with the canonical Kirillov-Kostant ``$p\,dq$'' geometrical action providing the Virasoro Poisson bracket. The geometric action for the simplest Virasoro orbits has been identified as the Liouville field theory \cite{alekseev1989path,wiegmann1989multivalued}. The Liouville field is defined through the diffeomorphism as $F'=e^{\phi}$
It can be obtained via the Drinfeld-Sokolov reduction from the geometric action on the coadjoint orbit of the $\hat{\mathrm{SL}(2,\mathbb{R})}$ Kac-Moody algebra- the WZW action. The relation with the $\hat{\mathrm{SL}(2,\mathbb{R})}$ Kac-Moody algebra provides an explanation of the winding number parameter of the Virasoro orbit since $\pi_1(\mathrm{SL}(2,\mathbb{R}))=\mathbb{Z}$ \cite{gorsky1991large}. The geometrical action corresponding to the $T_{\Delta,n}, T_{\pm,n}$ orbits involves additional perturbations; see, for example, the discussion in \cite{gorsky1995liouville,balog1998coadjoint}.

The equation of motion for the Liouville theory is $\dot{V}=0$ since it does not involve any Hamiltonian, $H=0$. To get the flows
on the coadjoint orbits, one has to add Hamiltonians. The family of integrable flows is provided by KdV Hamiltonians $H_k$
\begin{equation}
    \partial_{t_k} V=\{V,H_k\}
\end{equation}
with the generating function, expressed in terms of the trace of monodromy matrix, see, for instance \cite{bazhanov1996integrable} 
\begin{equation}
    \frac{1}{2\pi}\ln|\tr M(\lambda)|= \lambda - \sum_n c_n H_{2n-1} \lambda^{1-2n}
\end{equation}
where $c_1=1/2$, $c_n=(2n-3)!!/(2^nn!)$, $n>1$.The KdV Hamiltonians are in involution with respect to the Virasoro Poisson bracket $\{H_k,H_l\}=0$. Recall that the Hill operator is the Lax operator for the KdV hierarchy and the KdV evolution is isospectral, that is, $\lambda=\mathrm{const}$ if we consider the operator $\partial_x^2 + V(x) +\lambda$ with periodic $V(x)$. The isospectral KdV evolution of $V(x,t)$ goes along the Riemann surface defined as 
\begin{equation}
    \det\left(M(\lambda)-\eta\right)=0,
\end{equation}
where $M(\lambda)$ is monodromy matrix of the Schr\"{o}dinger operator and $\eta$ is the eigenvalue. 

For the Hill operator, the family of so-called finite-gap potentials can be extracted \cite{novikov1974periodic}, when only a finite number of KdV charges are independent. 
The example of such $N$-gap potential is $V(x)=N(N+1) \wp(x,\tau)$, where $\wp(x,\tau)$ is the Weierstrass elliptic function with modular parameter $\tau$. The generic Heun equation can be transformed into the Schr\"{o}dinger equation with the potential $V(x)=\sum_{i=1,2,3}l_i(l_i+1)\wp(x,\tau_i)$ which is finite-gap if $l_i\in N$. If we consider the Mathieu potential that corresponds to the certain limit of the Hill potential of the RSJ model, the isospectral evolution of KdV implies the evolution of the parameter $A$ at a fixed $B$, and the spectral curve has an infinite genus. In terms of the phase-locking phenomenon, the finite genus of the spectral curve corresponds to the finite number of Arnold tongues.

Another class of the flows corresponds to the isomonodromic evolution when the monodromy of the solutions to the Heun or Hill equations are fixed. In the RSJ model example, isomonodromic flows are discussed in~\cite{bibilo2022families} and are related to the solutions of the Painleve equations.

Remark that there exists a one-to-one correspondence between the Virasoro coadjoint orbits and the moduli space of two-dimensional hyperbolic metrics, see, for example~\cite{blau2024} for a detailed discussion. The orbits with elliptic monodromy correspond to the conical geometries, with hyperbolic monodromies to the annular geometries, and orbits with parabolic monodromies to the cuspidal geometries. The special orbits without constant representative correspond to the new topological sectors of two-dimensional gravity, characterized by twisted boundary conditions. In summary, the flows between the coadjoint orbits discussed above correspond to the flows between the different 2D geometries with possible modification of the boundary conditions. Certainly a detailed analysis of the flows in the $(B,A)$ plane in RSJ modes is required.

\section{Rotation number quantization and  exact WKB}\label{sec:QM-Rho}

In this Section we first argue that the rotation number quantization
is nothing but the exact WKB quantization in QM which can be easily seen with the help of so-called Milne anzatz. Remark that there are several approaches for exact quantization scheme. Let us mention three of them; the exact WKB \cite{voros1983return,jentschura2004instantons,zinn2004multi}, the uniform WKB \cite{langer1934asymptotic,alvarez2004langer,dunne2014uniform} and the Milne quantization \cite{milne1930numerical,korsch1985milne}. The anzatz for the wave function in these approaches is different. The exact WKB has a more transparent link with the classical spectral curve, while the link with the rotation number quantization is more transparent in the Milne quantization approach. We start with a reminder about the exact WKB approach to quantization and then discuss its relation to the Milne ansatz. We focus on the cases where the potential is periodic, so instead of energy-level quantization, we have the band-gap structure. In the case of non-periodic potential, all the relations between Milne quantization approach and exact WKB method hold, too.

\subsection{Exact WKB}

Consider the general Schr\"{o}dinger equation
\begin{equation}\label{eq:Schrod-general}
    -\hbar^2\frac{d^2\psi}{dx^2} + \left(V(x)-E\right)\psi(x)=0,
\end{equation}
where $\hbar$ is dimensionless Planck constant, $V(x)$ and $E$ are dimensionless (i.e. rescaled by energy unit) potential and energy, respectively. To analyze this equation, one can introduce the exact WKB ansatz \cite{grassi2020non}
\begin{equation}\label{eq:WKB-wavefunction}
    \psi(x)=\exp\left\{\frac{i}{\hbar}\int^xdx'\,Q(x')\right\}.
\end{equation}
Substituting the ansatz~\eqref{eq:WKB-wavefunction} into the eq.~\eqref{eq:Schrod-general}, we see that the function $Q(x)$ obeys the Riccati equation.
\begin{equation}\label{eq:WKB-Riccati}
    Q^2 -i\hbar\frac{dQ}{dx}=K(x)^2,\quad K(x) = \sqrt{E-V(x)}.
\end{equation}
In the exact WKB approach, the quantity $Q(x)$ is complex and should be expanded to the formal power series with respect to $\hbar$. We can decompose $Q(x)$ as the sum of two contributions, odd and even powers of $\hbar$, $Q(x)=Q_{\text{odd}}(x)+Q_{\text{even}}(x)$. Substituting this decomposition into the Riccati equation~\eqref{eq:WKB-Riccati}, we obtain system of two equations,
\begin{equation}\label{eq:WKB-Q-system}
\begin{gathered}
    Q_{\text{even}}(x)^2+Q_{\text{odd}}^2-i\hbar\frac{dQ_{\text{odd}}(x)}{dx}=K(x)^2,\\
    2Q_{\text{even}}(x)Q_{\text{odd}}(x)-i\hbar\frac{dQ_{\text{even}}(x)}{dx}=0.
\end{gathered}
\end{equation}
The second line tells us that the odd contribution $Q_{\text{odd}}(x)$ is the total derivative of the logarithm of the even contribution up to numerical factor $i\hbar/2$. If we set $Q_{\text{odd}}(x)=-i\hbar(\ln f(x))'$, we can represent the even contribution as $Q_{\text{even}}(x)=Cf(x)^{-2}$, where $C$ is the integration constant, and we can set $C=1$ without loss of generality. Using the mentioned decomposition, denoting $Q_{\text{even}}(x)\equiv P(x)$, and taking into account that $Q_{\text{odd}}$ is the total derivative, we can rewrite the WKB ansatz~\eqref{eq:WKB-wavefunction} as
\begin{equation}\label{eq:P-WKB-wavefunction}
    \psi(x) = \frac{1}{\sqrt{P(x)}}\exp\left\{\frac{i}{\hbar}\int^x dx'\,P(x')\right\}.
\end{equation}
The quantity $P(x)$ is called \emph{quantum-corrected momentum} and should be treated as the power series $P(x)=\sum_np_n(x)\hbar^{2n}$. The quantity $p_0(x)$ coincides with the classical momentum. From the geometric point of view, the quantity $dx\,P(x)$ is the meromorphic differential on the elliptic curve $(x,y)$ defined as
\begin{equation}
    y^2=E-V(x).
\end{equation}
Note that this curve also defines the isoenergetic surface. Therefore, in the exact WKB anzatz the classical differential $dx\,p_0(x)$ is replaced by the quantum differential $dx\,P(x)$. Note that strictly speaking the expression for the wave function~\eqref{eq:P-WKB-wavefunction} should contain two constants, because it is a \emph{general} solution of the second order ODE. The expansion of the quantum momentum in terms of the classical spectral curve can be found, for instance, in \cite{mironov2010nekrasov}. We note that exact WKB quantization was introduced long ago \cite{voros1983return,jentschura2004instantons,zinn2004multi} and has attracted attention recently due to new approaches to the resurgence of series in $\hbar$ \cite{grassi2016topological,grassi2020non,dunne2017wkb}. The
exact WKB is closely related to the TBA equations \cite{ito2025exact}, the
cluster algebras \cite{iwaki2014exact}, the Painleve equations \cite{del2023threefold} and the generic spectral networks \cite{gaiotto2013spectral}. This is not a complete list of applications and references.

\subsection{Milne quantization approach}

In the paper \cite{milne1930numerical} by Milne, the author observed that the energy spectrum in the Schr\"{o}dinger equation~\eqref{eq:Schrod-general} can be obtained from the solution of the EP equation. Despite the fact that periodicity of the potential $V(x)$ is not required, we focus on the periodic potential $V(x)$, so the Schr\"{o}dinger equation is the Hill equation. The corresponding EP equation is
\begin{equation}\label{eq:EP-Milne-form}
    w''(x)+k(x)^2w(x)=w(x)^{-3},\quad k(x)=\frac{\sqrt{E-V(x)}}{\hbar}.
\end{equation}
The general complex solution $\psi(x)$ of the Hill equation can be written in the form
\begin{equation}\label{eq:General-WKB}
    \psi(x) = \mathcal{A}w(x)\exp\left\{i\int^x\frac{dx'}{w(x')^2}+i\alpha\right\},
\end{equation}
where $\mathcal{A}$ and $\alpha$ are constants. Comparison of Milne's anzatz~\eqref{eq:General-WKB} and the corresponding EP equation~\eqref{eq:EP-Milne-form} to the exact WKB anzatz~\eqref{eq:P-WKB-wavefunction}, gives us the relation between $P(x)$ and $w(x)$, $P(x)=\hbar w(x)^{-2}$. The additional power of $\hbar$ comes from the definition of quantity $k(x)$ and should not be confused.

Since we know that the solution of the EP equation can be expressed via two linearly independent solutions of the Hill equation (see eq.~\eqref{eq:EP-general-solution}), the appeared integral in~\eqref{eq:General-WKB} can be evaluated explicitly \cite{korsch1985milne},
\begin{equation}\label{eq:Explicit-Milne}
    \int_{x_1}^{x_2}\frac{dx}{w(x)^2} = -\left.\arctan\left[W\left(b+a\frac{u_1(x)}{u_2(x)}\right)\right]\right|_{x_1}^{x_2},
\end{equation}
where $W$ is the Wronskian of two solutions $\psi_1(x)$ and $\psi_2(x)$, $a$ and $b$ are two constant that depend on the initial conditions. According to \cite{johnson1982rotation}, we can define the rotation number for the respective dynamical system on torus by using the eq.~\eqref{eq:General-WKB} as
\begin{equation}
    \rho = 2\lim\limits_{x\rightarrow\infty}\frac{\arg\psi(x)}{x}=\lim\limits_{x\rightarrow\infty}\frac{2}{x}\int_{x_0}^x\frac{dx'}{w(x')^2}=\lim\limits_{x\rightarrow\infty}\frac{2}{x}\left.\arctan\left[W\left(b+a\frac{\psi_1(x)}{\psi_2(x)}\right)\right]\right|_{x_0}^{x},
\end{equation}
where $\psi_1(x)$ and $\psi_2(x)$ are two real independent solutions. We know that the rotation number does not depend on the initial value $x_0$, so we can set $x_0=0$. Then, the inverse tangent is determined up to $\pi k$, $k\in\mathbb{Z}$, we need to choose a correct branch of this function. This choice implies the counting of poles on the whole real line. The number of poles is nothing more than the number of zeros in the function $\psi_2(x)$. So, we again arrive at the relation between the rotation number and the number of zeros of the Hill equation solution. It is not a surprise because due to the Sturm comparison theorem the density of states (i.e. the number of eigenvalues that are less than a given value $E$ normalized by the interval length) is directly related to the number of solution zeros (they coincide up to addition of $\pm 1$), so the rotation number coincides with the density of states up to the addition of unity, which was noticed in \cite{johnson1982rotation} and stated earlier. In the case of periodic potential, Milne's quantization corresponds to the quantization of bands. Each band is labeled by the number of zeros of the Hill equation solution in this band. Again, we finalize this section by the comparison table; see Tab.~\ref{tab:Band-Gap-Rho}.

\begin{table}[h!]
\centering
\begin{tabular}{|c|c|c|c|}
\hline
zone structure & rotation number        & Hill monodromy  \\ \hline
band           & $\rho \in \mathbb{Z}_{>0} $             & hyperbolic \\ \hline
gap            & $\rho\notin\mathbb{Z}$ & elliptic   \\ \hline
band-gap edge  & ---                    & parabolic  \\ \hline
\end{tabular}
\caption{Correspondence between band-gap structure, rotation number and Hill equation monodromies.}
\label{tab:Band-Gap-Rho}
\end{table}

\section{Slow-fast dynamics and semiclassical WKB}\label{sec:Slow-Fast-WKB}

In this section, we discuss how the slow-fast behavior in classical dynamical systems can be mapped to the WKB analysis in quantum systems. The small parameter in the dynamical system plays the role of the Planck constant in the corresponding Schr\"{o}dinger equation. This approach looks quite prominent because it allows one to apply some well-known results and ideas from quantum mechanics and obtain new insights related to the slow-fast effects. We focus on dynamical systems on torus of the M\"{o}bius type and use the Mathieu equation and the Hill equation for the RSJ model as working examples.

\subsection{Slow-fast dynamics and canard phenomenon}

In the dynamical system theory a general slow-fast system can be represented as
\begin{equation}
    \dot{x}= F(x,y), \quad \dot{y}=\epsilon G(x,y)
\end{equation}
where $\epsilon\ll 1$. The variable $x$ is called a fast variable, while the variable $y$ is a slow variable. In the leading approximation $y=\mathrm{const}$ and the so-called slow manifold $F(x,y)=0$ appears. The dynamics of a slow-fast system is as follows. The flow is concentrated mainly near the slow manifold when $y$ is considered as the fixed parameter. The slow manifold involves stable and unstable components, and slow-fast behavior favors the existence of the so-called \emph{canards} \cite{diener1984canard,callot1981chasse}. They correspond to the jumps in the slow manifold at some fixed values of $y$. 

One of the simplest possible examples for the slow-fast dynamics with canards is forced Van der Pol (VdP) oscillator \cite{desroches2011canards}, which has the following equation of motion,
\begin{equation*}\label{eq:VdP-slow-fast}
    \ddot{x}+\mu\left(1-x^2\right)\dot{x}+x=a,
\end{equation*}
where $\mu\gg 1$, $a=\mathrm{const}$. Denoting $\epsilon=\mu^{-1}$, introducing the variable $y=\epsilon\dot{x}+x^3/3-x$ and rescaling the time as $t\rightarrow t/\epsilon$, this equation becomes
\begin{equation}
    \dot{x}=y-\frac{x^3}{3}+x,\quad \dot{y}=\epsilon\left(a-x\right).
\end{equation}
The slow manifold is given by the curve $S(x,y)$,
\begin{equation*}
    S(x,y) = \left\{(x,y): y-\frac{x^3}{3}-x=0\right\}.
\end{equation*}
The curve $S$ has two attractive branches and one repellent branch. They are separated by fold points that correspond to saddle-node bifurcation points of the fast subsystem. The dynamics is slow near the curve $S$, but in general can be supplemented by extended fragments at fixed $y$ between the attractive and repellent parts of the curve $S$. Depending on the parameters $(\epsilon,a)$ there can be a limit cycle in the form of relaxation oscillation, a canard with head and a canard without head. The case of a dynamical system on the torus is more interesting. In the paper \cite{guckenheimer2001duck} the authors have shown that an auxiliary parameter (such as an external force in the case of the VdP oscillator) is not necessary for the existence of canards. This research was continued in \cite{shchurov2010canard,schurov2017duck} and in \cite{kleptsyn2013josephson}, where the first attempt was made to analyze slow-fast dynamics in the RSJ model.

The main feature of canards is that they are effects of order $\exp(-1/\epsilon)$, which implies the non-perturbative nature of the phenomenon. Using the correspondence between dynamical system on torus of M\"{o}bius type and Hill equation, we will argue that the complexified slow manifold coincides (upon change of variables) with the spectral curve, defined by the Schr\"{o}dinger equation with a given Hill potential. It means that the canards in dynamical systems presumably correspond to the non-perturbative effects in the quantum mechanics: instantons and anti-instantons. We will consider the different aspects of the interplay between canards and instanton-like effects elsewhere.

\subsection{Slow-fast dynamics versus Schr\"{o}dinger equation}

Here we provide the correspondence between the slow-fast dynamics in the dynamical system on torus of M\"{o}bius type and the semiclassical limit of the Schr\"{o}dinger equation. We will parallel the representation of the solution to the slow-fast dynamical system and the exact WKB anzatz in quantum mechanics. The small parameter in the dynamical system is mapped with the Planck constant. Using the intuition from the semiclassical approximation in quantum mechanics, we will argue that the complexification of the slow manifold which is the counterpart of the spectral curve of the complex Hamiltonian system provides the proper framework for the analysis of the widths of the phase-locking domains as well as the canard-type phenomena. We start with the Hill equation,
\begin{equation}
    \frac{d^2u}{dt^2}+V(t)u=Eu(t),
\end{equation}
where $V(t)$ is the periodic function with period $2\pi/\omega$ and $E$ is the parameter. Denoting $\tau=\omega t$ and rescaling all the quantities we obtain
\begin{equation}\label{eq:Hill-SlowFast}
    \omega^2\frac{d^2u}{d\tau^2}+V(\tau)u(\tau)=Eu(\tau),
\end{equation}
where all the coefficients are dimensionless. The equation~\eqref{eq:Hill-SlowFast} can be represented as the system of first order equations,
\begin{equation}\label{eq:Hill-Phi-omega}
    \dot{u} = v,\quad \dot{v} = \left(E-V(\tau)\right)u,\quad \dot{\tau}=\omega.
\end{equation}
In the limit $\omega\rightarrow 0$, the variable $\tau$ is \emph{slow}, while the variables $u$ and $v$ are \emph{fast}. In order to represent the slow manifold in a convenient form, we introduce the quantity $\Phi=-v/u$, that brings the system~\eqref{eq:Hill-Phi-omega} into the Riccati equation,
\begin{equation}\label{eq:Hill-Phi-tau}
    \dot{\Phi} = -E+\Phi^2+V(\tau),\quad \dot{\tau}=\omega.
\end{equation}
The quantity $\Phi(\tau)$ can be also introduced by using the ansatz
\begin{equation}\label{eq:Slow-fast-Riccati}
    u(\tau)=\exp\left\{-\frac{1}{\omega}\int d\tau\,\Phi(\tau)\right\},
\end{equation}
that transforms~\eqref{eq:Hill-SlowFast} into the Riccati equation for variable $\Phi(\tau)$, eq.~\eqref{eq:Hill-Phi-tau}. The slow manifold is determined by the condition $\dot{\Phi}=0$, so it is defined by the relation
\begin{equation}\label{eq:Hill-E-surface}
    E=\Phi^2+V(t).
\end{equation}
This relation~\eqref{eq:Hill-E-surface} shows that the slow manifold determined by $\dot{\Phi}=0$ in eq.~\eqref{eq:Hill-Phi-tau} is nothing more than the fixed energy condition for the single non-relativistic degree of freedom in potential $V(t)$. The equation for $\Phi(t)$ in~\eqref{eq:Hill-Phi-tau} can be matched with the dynamics on the torus. Indeed, let $\phi(t)=2\arctan\Phi(t)$, then the phase variable $\phi(t)$ obeys the equation
\begin{equation}
    \dot{\phi} = \left(V(t)-E+1\right) + \left(V(t)-E-1\right)\cos\phi,
\end{equation}
which we have obtained earlier (cf. with eq.~\eqref{eq:Hill-DynTorus}). 

As we have noted above, the classical spectral curve which for the one-dimensional system is the fixed energy manifold in the complexified phase space is the central object in the exact quantization and semiclassical limit. To obtain the explicit results in exact WKB, we integrate the classical or quantum momentum over the cycles on this Riemann surface. In the simplest example of the elliptic spectral curve which occurs in the Mathieu example, there are two types of cycles, $\alpha$ and $\beta$ cycles. Generally speaking, the first type contains real turning points, whereas the second type contains complex turning points. The integration over the ``perturbative'' $\alpha$-cycle gives the exact WKB quantization, while the integration over ``non-perturbative'' $\beta$-cycle gives the instanton-like contributions, for instance the narrow gaps at high energies. For a generic situation, the genus of the Riemann surface can be high or even infinite. To match the slow-fast dynamics approach with the WKB approach, we treat $\Phi$ and $t$ as complex variables. The ideas described above are related to the discussion in \cite{kristiansen2024dynamical,kristiansen2025dynamical}, where the authors start with the theory of dynamical systems and then apply ideas from quantum mechanics. In contrast, we start from the quantum mechanical problem that can be matched with the dynamical system.

It is worth commenting on the relation between the complexified slow manifold and the naive torus pictured in Fig.~\ref{fig:dynamics-on-torus}. The naive torus involves $\phi\in S^1$, $\tau\in S^1$ and does not have well-defined moduli of the complex structure. In contrast, the complexified slow manifold $\Phi\in\mathbb{C}$, $t\in\mathbb{C}$ supplemented by the single complex equation~\eqref{eq:Hill-Phi-tau} is a Riemann surface whose genus depends on a particular model under consideration. The moduli of the complex structure of this surface depend on the parameters of the dynamical system, and generically there are some points on the moduli space when the surface degenerates. The Riemann surface is defined in the slow-fast limit, but similarly to exact WKB defines the behavior beyond this limit, only the variable $\Phi$ that obeys the Riccati equation takes a more complicated form.

Assuming $\Phi\in\mathbb{C}$ in~\eqref{eq:Hill-Phi-tau}, let us again write $\Phi(\tau)=\Phi_{\text{even}}(\tau)+\Phi_{\text{odd}}(\tau)$, where the even contribution contains only even powers of $\omega$, whereas the odd contains only odd powers. Then, the Riccati equation becomes system of two equations for $\Phi_{\text{even}}(\tau)$ and $\Phi_{\text{odd}}(\tau)$,
\begin{equation}
    \omega\frac{d\Phi_{\text{odd}}}{d\tau}=-E+\Phi_{\text{even}}^2+\Phi_{\text{odd}}^2+V(\tau),\quad   \omega\frac{d\Phi_{\text{even}}}{d\tau}=2\Phi_{\text{even}}\Phi_{\text{odd}}.
\end{equation}
In this equation let us denote $K(\tau)^2=E-V(\tau)$, then we have
\begin{equation}
    -\omega\frac{d\Phi_{\text{odd}}}{d\tau}+\Phi_{\text{even}}^2+\Phi_{\text{odd}}^2=K(\tau)^2,\quad \omega\frac{d\Phi_{\text{even}}}{d\tau}=2\Phi_{\text{even}}\Phi_{\text{odd}}.
\end{equation}
The second equation says that $\Phi_{\text{odd}}$ is the total derivative of the logarithm of $\Phi_{\text{even}}$ multiplied by $\omega/2$. Setting $\Phi_{\text{odd}}(\tau)=-\omega(\ln f(\tau))'$ (cf. with quantum mechanics), we find $\Phi_{\text{even}}(\tau)=Cf(\tau)^{-2}$, where $C$ is the integration constant, and again we set $C=1$. Denoting $\Phi_{\text{even}}(\tau)\equiv F(\tau)$, we obtain the following expression for the solution $u(\tau)$,
\begin{equation}
    u(\tau) = \frac{1}{\sqrt{F(\tau)}}\exp\left\{-\frac{1}{\omega}\int^{\tau}d\tau'F(\tau')\right\}.
\end{equation}
Also, we know that the function $f(\tau)$ obeys the EP equation,
\begin{equation}
    -\omega^2\frac{d^2f}{d\tau^2}+K(\tau)^2f(\tau)=f(\tau)^{-3}.
\end{equation}
When $\omega\rightarrow 0$, we obtain the relation $F(\tau)=f(\tau)^{-2}=\sqrt{E-V(\tau)}$, as should be. Finally, we can represent the ansatz for $u(\tau)$ in terms of $f(\tau)$,
\begin{equation}
    u(\tau)=f(\tau)\exp\left\{-\frac{1}{\omega}\int^{\tau'}\frac{d\tau}{f(\tau)^2}\right\}
\end{equation}
The summary of the relation between slow-fast dynamics and exact WKB ansatz is given in Tab.~\ref{tab:slow-fast-WKB}.
\renewcommand{\arraystretch}{2.0}
\begin{table}[h!]
\hspace*{-2em}
\begin{tabular}{|c|c|c|}
\hline
& Quantum mechanics & Slow-fast system \\[1.4ex] \hline
equation & $\displaystyle-\hbar^2\psi''+V(x)\psi(x)=E\psi(x)$ & $\displaystyle\omega^2\ddot{u}+V(\tau)u=Eu$ \\[1.2ex] \hline
ansatz & $\displaystyle\psi(x) = \exp\left\{\frac{i}{\hbar}\int^xdx'\,Q(x')\right\}$\ & $\displaystyle u(\tau)=\exp\left\{-\frac{1}{\omega}\int^{\tau} d\tau'\,\Phi(\tau')\right\}$ \\[1.4ex] \hline
decomposition   & $Q(x)=Q_{\text{odd}}(x)+P(x)$ & $\Phi(\tau) = \Phi_{\text{odd}}(\tau)+F(\tau)$ \\[1.4ex] \hline
reduced ansatz & $\displaystyle\psi(x)=\frac{1}{\sqrt{P(x)}}\exp\left\{\frac{i}{\hbar}\int^x dx'\,P(x)\right\}$ & $\displaystyle u(\tau)=\frac{1}{\sqrt{F(\tau)}}\exp\left\{-\frac{1}{\omega}\int^\tau d\tau'\,F(\tau')\right\}$ \\[1.4ex] \hline
EP ansatz  & $\displaystyle\psi(x)=f(x)\exp\left\{\frac{i}{\hbar}\int^x\frac{dx'}{f(x')^2}\right\}$ & $\displaystyle u(\tau)=f(\tau)\exp\left\{-\frac{1}{\omega}\int^{\tau}\frac{d\tau'}{f(\tau')^2}\right\}$ \\[1.4ex] \hline
EP equation & $\hbar^2f''+K(x)^2f=f^{-3}$ & $-\omega^2\ddot{f}+K(\tau)^2f=f^{-3}$ \\ \hline
\end{tabular}
\caption{Relations between exact WKB ansatz in quantum mechanics and slow-fast dynamics for system on torus}
\label{tab:slow-fast-WKB}
\end{table}
\renewcommand{\arraystretch}{1.0}

The link between the QM with periodic potential and the dynamical system on the torus suggests the possibility of using results concerning the exact and semiclassical WKB for the dynamical system. The key point is the complexification of the ``phase space'' variables in the dynamical system $(\Phi,t)\in\mathbb{C}$, therefore the slow manifold defined by the condition $\dot{\Phi}=0$ is the Riemann surface of the particular genus which depends on the potential in the Hill equation. The complexification of phase space is quite familiar in the non-perturbative analysis of the Hamiltonian systems. On the other hand, looking from the perspective of the parametric oscillator, the complexification of the time variable has been reviewed in \cite{popov2005imaginary}. In particular, complex time instantons are relevant for non-perturbative effects on particle dynamics in the time-dependent electric field \cite{brezin1970pair,popov1972pair}.

The ``$p\,dx$'' differential now reads as the $\Phi\,d\tau$ differential on the Riemann surface, and we can apply some results familiar from the holomorphic Hamiltonian systems. First, assuming the simplest case of the torus spectral curve we introduce two action integrals over ``perturbative'' and  ``non-perturbative'' $\alpha$ and $\beta$ cycles
\begin{equation}
    a=\oint_{\alpha}d\tau\,\Phi \qquad  a_D=\oint_{\beta}d\tau\,\Phi
\end{equation}
which are useful to describe the band structure quantitatively.

That is, the exact equations for the boundaries of the bands in QM are analogues of the equations for the boundaries of the phase-locking domains in the dynamical system. The key element which provides the exact analytic expressions for the band boundaries is the auxiliary Riemann surface equipped with the particular meromorphic differential. This Riemann surface known as the spectral curve gets identified with the energy level and the corresponding differential is the ``$p\,dx$'' one. In QM the classical momentum $p_0(x)$ is replaced by $P(x)$, while the Riemann surface is not modified.

\subsection{Mathieu and RSJ examples}

\subsubsection{Mathieu equation}

We have the following Mathieu equation,
\begin{equation}\label{eq:Mathieu-dimensionful}
    \frac{d^2u}{dt^2}-\left(B-A\cos\omega t\right)u=0.
\end{equation}
This system contains three parameters: the constant shift $B$, the amplitude of potential $A$, and the potential frequency $\omega$. We have already shown the band-gap structure, see Fig.~\ref{fig:Mathieu-Plots}, but there exists an interesting behavior in the domains $B/A>-1$ and $B/A<-1$, see Fig.~\ref{fig:Mathieu-omegas}. In the domain with $B/A<-1$ the phase-locking domains are very thin, whereas at $B/A>-1$ the gaps between phase-locking domains are very thing.
\begin{figure}
    \centering
    \includegraphics[width=\linewidth]{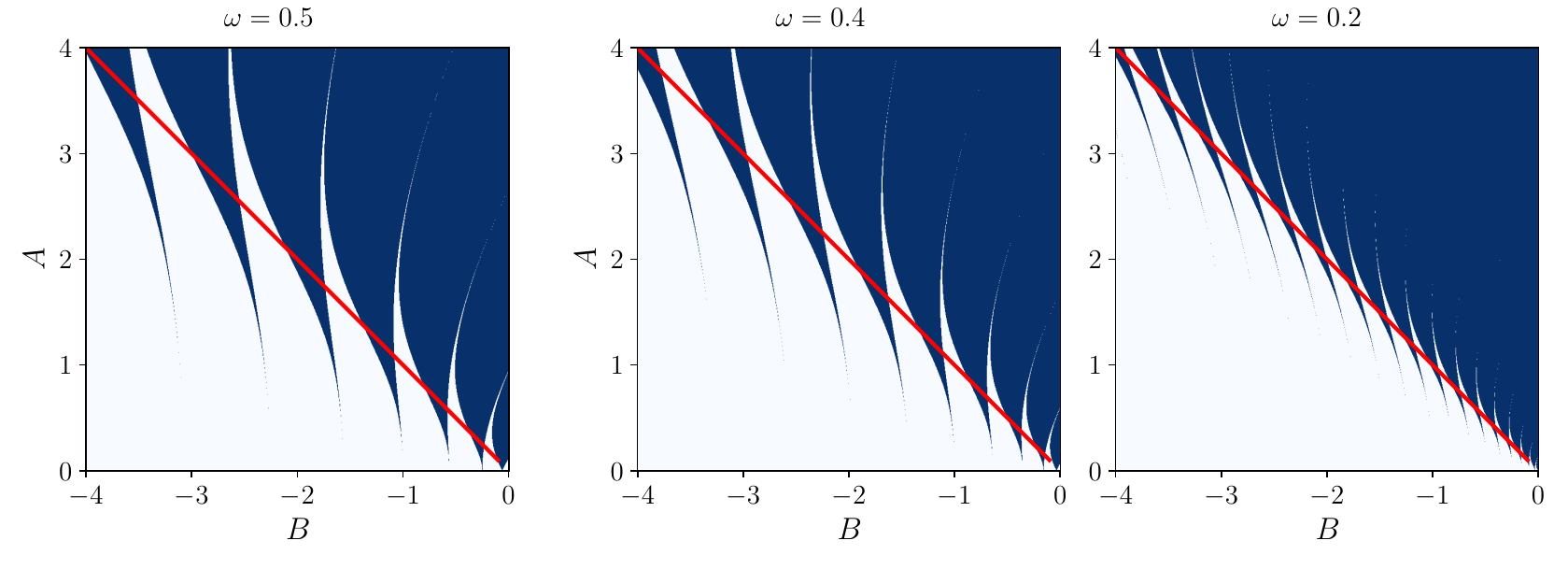}
    \caption{Phase-locking domains (shaded) for the Mathieu equation with small frequencies $\omega$. The line corresponds to $A=-B$ and separate two domains}
    \label{fig:Mathieu-omegas}
\end{figure}
In order to provide the connection between the slow-fast dynamics and the exact WKB ansatz, we notice that, in fact, the number of parameters is two, instead of three: we can use the ratio $B/A$ and $\omega$. It is convenient to denote $E=B/A$ and rescale $\omega\rightarrow\omega/\sqrt{A}$. Using these denotations, we rewrite the eq.~\eqref{eq:Mathieu-dimensionful} as
\begin{equation}\label{eq:Mathieu-dimensionless}
    \omega^2\frac{d^2u}{d\tau^2}-\left(E-\cos\tau\right)u=0,
\end{equation}
which can be rewritten as system of three first order equations,
\begin{equation*}
    \dot{u}=v,\quad \dot{v}=E-\cos\tau,\quad \dot{\tau}=\omega.
\end{equation*}
We plot the band-gap structure for this equation in Fig.~\ref{fig:Mathieu-WKB-Contours}. Notice that in terms of variable $E$, we have very thin Arnold tongues at $E<-1$ and very thin spacings between tongues at $E>-1$ for small $\omega$, which is consistent with Fig.~\ref{fig:Mathieu-omegas}. Then, using the projective coordinate, $\Phi=-v/u$, we obtain the equation for $\Phi$,
\begin{equation}\label{eq:Mathieu-slow manifold}
    \dot{\Phi}=-E+\cos\tau+\Phi^2,
\end{equation}
and slow manifold is defined by condition $\dot{\Phi}=0$. Then, we introduce the quantity $y=\left(e^{i\tau}-e^{-i\tau}\right)/2$, which allows us represent the complexified slow manifold $E=\Phi^2+\cos\tau$ from~\eqref{eq:Mathieu-slow manifold} as the elliptic curve, determined by the equation
\begin{equation}\label{eq:Mathieu-curve}
    y^2 = \left(E-\Phi^2\right)^2-1.
\end{equation}
This elliptic curve defines the 2D torus with two independent cycles, $\alpha$ and $\beta$. We consider the meromorphic one-form $\Omega=\Phi(\tau)\,d\tau=i\sqrt{E-\cos\tau}\,d\tau$, and   the integrals over the cycles $\alpha$ and $\beta$ are called \emph{action} and \emph{dual action} variables respectively,
\begin{equation}
    a=\oint_{\alpha}\Omega,\quad a_D=\oint_{\beta}\Omega.
\end{equation}
The cycles $\alpha$ and $\beta$ correspond to the contours in the complex plane that encircle the turning points of the potential $V(\tau)$. By definition, \emph{real} turning points $\tau_R$ are points where $\Phi(\tau)=0$, so they are given by
\begin{equation*}
    E-\cos\tau=0 \rightarrow \tau=\pm\tau_R=\pm\arccos E.
\end{equation*}
In addition to the real turning points, there are two \emph{imaginary} (complex) turning points that are $\tau=\pm\tau_{I}=\pm\pi$.
\begin{figure}
\begin{minipage}{0.49\linewidth}
    \centering
    \includegraphics{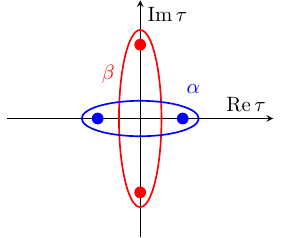}
\end{minipage}
\begin{minipage}{0.49\linewidth}
    \centering
    \includegraphics[width=0.8\linewidth]{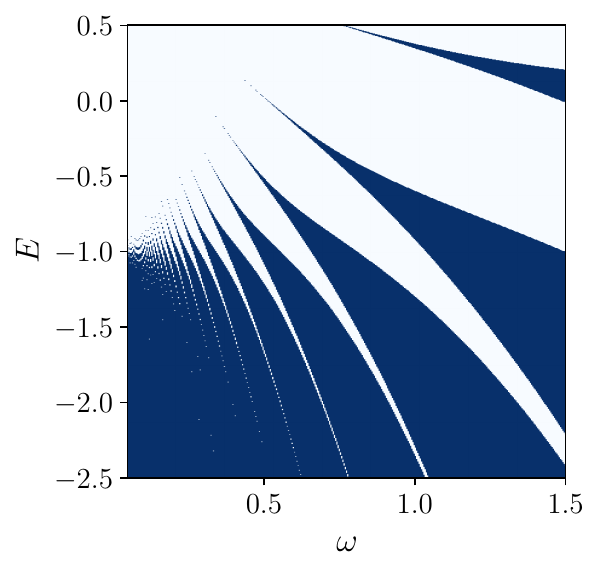}
\end{minipage}
    \caption{Left: two integration contours for the Mathieu equation that encircling real and complex turning points; right: band-gap structure of dimensionless Mathieu equation~\eqref{eq:Mathieu-dimensionless} (bands are shaded, gaps are empty)}
    \label{fig:Mathieu-WKB-Contours}
\end{figure}
The explicit form of these integrals is given by
\begin{equation*}
    a = 2i\int_{0}^{\tau_R}d\tau\,\sqrt{E-\cos\tau},\quad a_D = 2i\int_{0}^{\tau_I}d\tau\,\sqrt{E-\cos\tau}.
\end{equation*}
and they can be expressed in terms of the elliptic functions
\begin{equation}\label{eq:cycle-int-zero-order}
\begin{gathered}
    a(E)=\frac{4i}{\pi}\left[\bm{E}\left(\frac{1+E}{2}\right)-\frac{1-E}{2}\bm{K}\left(\frac{1+E}{2}\right)\right],\\
    a_D(E)=\frac{4}{\pi}\left[\bm{E}\left(\frac{1-E}{2}\right)-\frac{1+E}{2}\bm{K}\left(\frac{1-E}{2}\right)\right],
\end{gathered}
\end{equation}
where $\bm{K}(z)$ and $\bm{E}(z)$ denotes the complete elliptic integrals with parameter $z$ of the first and second kind, respectively. Note that in fact these expressions represent the leading-order terms, and they have to be denoted as $a^0(E)$ and $a_{D}^0(E)$. We omit the upper index for brevity.

The exact WKB approach tells us that during ``quantization'' (i.e., when we take into account all orders of $\omega$ in perturbative expansions) the elliptic curve~\eqref{eq:Mathieu-curve} remains the same and we only need to replace the ``classical'' variable $\Phi(\tau)$ by its ``quantum'' analog. To do it, we first make it quantum, setting $\Phi(\tau)=\Phi_{\text{odd}}(\tau)+F(\tau)$ and use the established above relation $F(\tau)=w(\tau)^{-2}$, where $w(\tau)$ is the solution of the EP equation. It gives the following expressions for integrals over $\alpha$- and $\beta$-cycles,
\begin{equation}\label{eq:Mathieu-integrals}
    a(E,\omega) = \oint_{\alpha}d\tau\,\Phi(\tau)=\int_{0}^{2\tau_R}\frac{d\tau}{w(\tau)^2},\quad a_D(E,\omega) = \oint_{\beta}d\tau\,\Phi(\tau)=\int_{0}^{2\pi}\frac{d\tau}{w(\tau)^2}.
\end{equation}
Let us first focus on the integral over the $\beta$-cycle in~\eqref{eq:Mathieu-integrals}. Since $w(\tau)$ is the solution of EP equation, we can immediately apply the Theorem~\ref{th:Hill-invariants}(see~\autoref{app:Inv-Virasoro}) and write down the following relation,
\begin{equation}
    \lambda_{1,2} = \exp\left\{\pm a_D(E,\omega)\right\},
\end{equation}
where $\lambda_{1,2}$ are eigenvalues of the monodromy matrix of the eq.~\eqref{eq:Mathieu-dimensionless}. Using the relation between the Floquet exponents and the eigenvalues, we obtain
\begin{equation}
    \nu_{1,2}=\frac{1}{2\pi}\ln\lambda_{1,2}=\pm\frac{a_D(E,\omega)}{2\pi},
\end{equation}
and this observation is consistent with the results in~\cite{he2012mathieu,he2015combinatorial}, obtained for Floquet exponents in perturbative manner. This means that the rotation number in the case of elliptic monodromy can be represented as $\rho = a_D(E,\omega)/(\pi i)$ (cf. with eq.~\eqref{eq:rotation-number-Delta}). Now, let us focus on the integral over the $\alpha$-cycle. Using the proposed above correspondence between slow-fast dynamics and the WKB approach to the Schr\"{o}dinger equation, we can use results related to the exact WKB for the Mathieu equation~\cite{dunne2017wkb,basar2015resurgence}. First of all, the all-orders WKB formulas, derived in~\cite{dunham1932wentzel}, defines the conditions for gaps and bands,
\begin{equation}\label{eq:Mathieu-band-gap-loc}
\begin{gathered}
    a(E,\omega)  = \oint_{\alpha}d\tau\,\Phi(\tau) = i\omega N\quad \text{for gaps}\\
    a(E,\omega) = \oint_{\alpha}d\tau\,\Phi(\tau)  = i\omega \left(N+\frac{1}{2}\right)\quad \text{for bands}.
\end{gathered}
\end{equation}
These expressions can be easily understood with the help of Milne's ansatz, which was implicitly used in~\cite{sukhatme1999}. Complete quantum analysis of the spectral band structure is available \cite{zinn2004multi, basar2015resurgence,gorsky2018bands}.

Adopting results for the exact WKB for the Mathieu equation, we can also estimate the width $\Delta$ of phase-locking domains (bands). To do it, we use the exact conditions~\eqref{eq:Mathieu-band-gap-loc} and the expression for $a$ and $a_D$ in the lowest order in $\omega$, see eq.~\eqref{eq:cycle-int-zero-order}. Using the results from~\cite[section 5]{dunne2017wkb}, we obtain the following expression for the width of phase-locking domains at $E\approx -1$,
\begin{equation}
    \Delta\sim  \sqrt{\frac{2}{\pi}}\frac{2^{4N+1}}{N!}\left(\frac{2}{\omega}\right)^{N-1/2}\exp\left(-\frac{8}{\omega}\right).
\end{equation}
Now, we can return the original variables. First, $E=B/A$, i.e. we can treat the condition $E\approx -1$ as $A\approx -B$. Second, we return the original frequency, which gives
\begin{equation}
    \Delta\sim\sqrt{\frac{2}{\pi}}\frac{2^{4N+1}}{N!}\left(\frac{2\sqrt{A}}{\omega}\right)^{N-1/2}\exp\left(-\frac{8\sqrt{A}}{\omega}\right).
\end{equation}
Next, we can estimate the width $\delta$ of the gaps between the phase-locking domains (spacings between Arnold tongues) in the limit of $E\gg 1$ or $N\omega\gg 1$. Again, we use the exact conditions for gaps and bands and expressions~\eqref{eq:cycle-int-zero-order}. This gives the following expression,
\begin{equation}
    \delta\sim \frac{\omega^2N}{2\pi A}\left(\frac{e\sqrt{A}}{\omega N}\right)^{2N}.
\end{equation}
The important fact is that the slow-fast dynamics makes the Arnold tongues exponentially thin in domain $B/A<-1$, which is consistent with numerical simulations. As we know that canards are associated with behavior $\exp(-1/\omega)$, we connect these exponentially thin tongues with canards, which is consistent with the discussion in~\cite{guckenheimer2001duck}.

\subsubsection{Slow-fast dynamics in RSJ}

The analysis of slow-fast dynamics in the RSJ model was started in the papers by \cite{guckenheimer2001duck,shchurov2010canard} and continued in \cite{klimenko2013asymptotic,kleptsyn2013josephson}. Let us summarize the structure of phase-locking domains in RSJ in the slow-fast regime on the $(A,B)$ parameter plane (described in \cite{kleptsyn2013josephson}),
\begin{table}[h!]
\begin{tabular}{|c|c|c|}
\hline
           & parameters  & properties of Arnold tongues                        \\ \hline
Domain I   & $A<B-1$     & very thin tongues                                    \\ \hline
Domain II  & $B-1<A<B+1$ & very thin spacings between tongues, no constrictions \\ \hline
Domain III & $A>B+1$     & parquet-like tongues with constrictions              \\ \hline
\end{tabular}
\end{table}

The behavior of Arnold tongues in these domains is shown in Fig.~\ref{fig:slow-fast-tongues}. We see that this behavior shares some similarity with the Mathieu equation (existence of very thin tongues and spacings between them). Moreover, the constrictions appear when $A>B+1$.
\begin{figure}[h!]
    \centering
    \includegraphics[width=\linewidth]{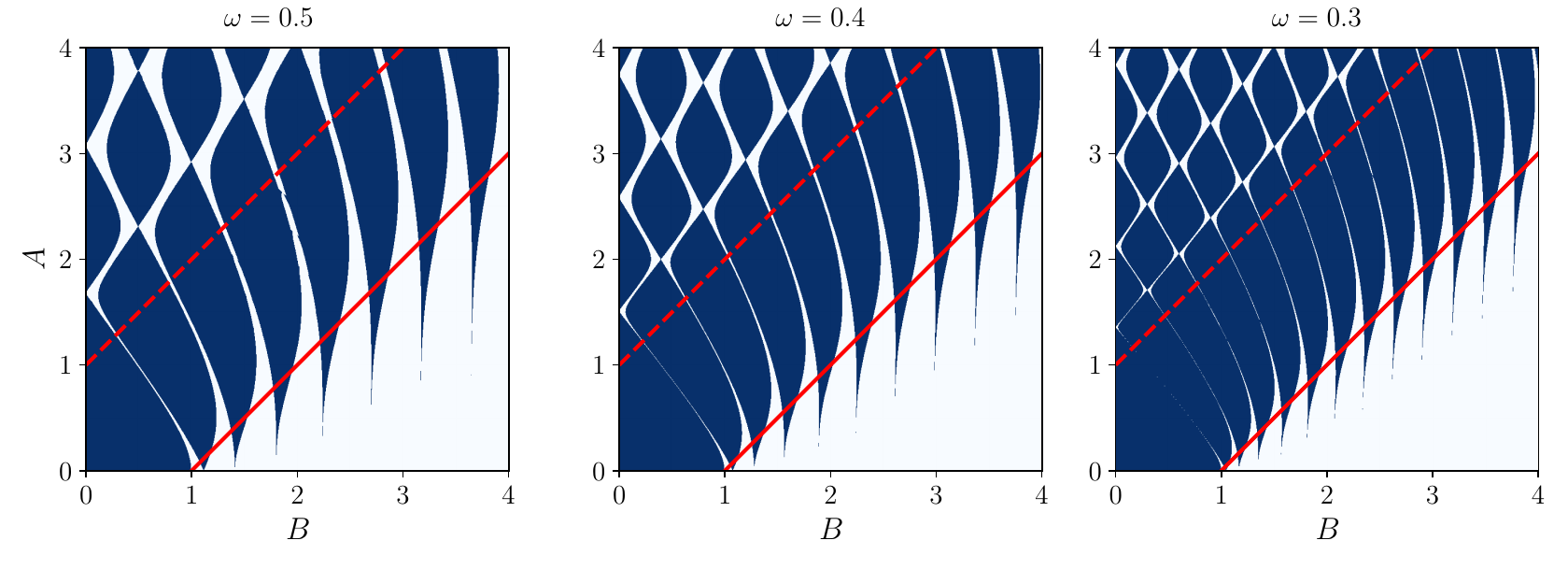}
    \caption{Arnold tongues for different values of $\omega$. Domain I corresponds to the region below the full red line; domain II corresponds to the region between full \& dashed red lines; domain III corresponds to the region above the dashed line (color online)}
    \label{fig:slow-fast-tongues}
\end{figure}
As was noticed in \cite{renne1974some}, in the domain I the potential $V(t)$ in Hill equation~\eqref{eq:RSJ-Hill-potential} in $\omega\rightarrow 0$ limit with $A\ll 1$ is well-approximated by the Mathieu potential,
\begin{equation}
    V(t)\approx \frac{B^2-1}{4}+\frac{AB}{2}\cos\omega t
\end{equation}
So, in this domain the results for the Mathieu equation holds. In general, we realize the following program:
\begin{enumerate}
\setlength{\itemsep}{0pt}
\setlength{\parskip}{0pt}
    \item Using the explicit form of the Hill equation potential $V(t)$ for the RSJ model, determine the spectral curve. Based on the known facts, this curve is still torus.
    \item Compute in the lowest order in $\omega$ the integrals over cycles.
    \item Using the exact WKB quantization results, derive the estimations for the width of Arnold tongues in the $A<B-1$ domain, the width of gaps between tongues in the $B-1<A<B+1$ domain.
    \item Keeping in mind that the potential has a singularity at $A>B+1$, relate the appearance of the constrictions and the parquet-like structure to the cycle integrals.
\end{enumerate}
In terms of the complexified slow manifold, the non-perturbative effect corresponds to the motion along the non-perturbative cycle instead of jump as we have discussed for generic slow-fast system. However, the Hill equation corresponding to the RSJ model has a quite complicated potential structure. The accurate and rigorous analysis will be presented elsewhere, but the idea is clear: the non-perturbative effects in $\omega$ in the RSJ model can be captured by the exact WKB approach and the complexification of the slow manifold.

\section{Discussion}\label{sec:Discussion}

Our study certainly suggests many directions for further research. Let us mention some of them.
\begin{itemize}
\setlength{\itemsep}{0pt}
\setlength{\parskip}{0pt}
\item Since we have linked the dynamical system of the torus to the quantum mechanics with periodic potential we can use the new tools and ideas developed during the last decades for the latter. As we have argued, the slow-fast dynamics can be effectively described in terms of the complexified slow manifold, which is analogue of the Seiberg-Witten curve in the context of the low-energy $\mathcal{N}=2$ SUSY YM theory \cite{seiberg1994electric}. It underlies the dynamics of the surface defect and naturally provides the quantum mechanical system in the Nekrasov-Shatashvili  limit of the $\Omega$-deformed theory \cite{nekrasov2003seiberg,nekrasov2010quantization}. 

The complexified slow manifold is bundled over the moduli space of complex structures. Generically, there are degeneration points on the moduli space where some cycle of the Riemann surface gets shrunk. The variables $(a_i,a_{D,i})$ form the symplectic pair in the so-called Whitham Hamiltonian system that governs the dynamics on the moduli space of the Riemann surface \cite{krichever1994tau,gorsky1998rg,edelstein1999whitham}. 
The Hamiltonian equation of motion in the Whitham system involves the so-called prepotential $\mathcal{F}$ playing the role of the action variable 
\begin{equation}
    a_{D,i}=\frac{\partial \mathcal{F} }{\partial a_i}, \qquad E= \frac{\partial \mathcal{F} }{\partial t}
\end{equation}
Here we denote the time variable in the Whitham equation of motion as $t$ however 
it depends on the problem under consideration. It would be interesting to develop the
Whitham approach for the generic dynamical system on the torus along these lines. 
It would use the relation with the topological string picture, since the prepotential in all such problems can be interpreted as the partition function of the topological string at some target manifold. The general relation between quantum mechanics and topological strings can be found
in \cite{grassi2016topological,codesido2019non}. 

As in the $\mathcal{N}=2$ SYM context, the action of a modular group
on the different objects related to the complexified slow manifold provides additional tools to analyze the behavior of the dynamical system.

\item If we consider the complexified slow manifold, there is a marginal stability curve which, for the simplest $\mathrm{SU}(2)$ example, reads as $\im (a_D/a)=0$ \cite{seiberg1994electric}
at the base of the fibration of the corresponding Riemann surface. It is the simplest example of generic wall crossing phenomena; see, for instance, \cite{kontsevich2014wall, gaiotto2010four}. Hence, we can expect a similar wall-crossing phenomenon for the dynamical system of the torus. The careful account of the Stokes lines and surfaces provides the proper way to approach these issues. In the RSJ model the Stokes phenomena have been discussed in \cite{bibilo2022families}. 

\item At the quantum level, there is a so-called P/NP relation that connects perturbative and non-perturbative contributions \cite{zinn2004multi,basar2015resurgence,basar2017quantum,ccavucsouglu2024resurgence} to different quantum-mechanical variables and provides an effective tool for the resurgence analysis in quantum mechanics. It was argued that the P/NP relation can be interpreted as the equation of motion in the Whitham theory \cite{gorsky2015rg}. The P/NP relation in the classical limit can be considered as the simple consequence of the PF equation. However, in the quantum case it has a bit more complicated geometrical interpretation; see, for instance,~\cite{dunne2017wkb}. For instance for the Mathieu case this reads as 
\begin{equation}
    \frac{\partial E(a,\hbar)}{\partial a} = \frac{i}{8\pi}\left(a_D(a,\hbar) - a\frac{\partial a_D(a,\hbar)}{\partial a}
    -\hbar \frac{\partial a_D(a,\hbar)}{\partial \hbar}\right)
\end{equation}
where $a(E,\hbar)$ and $a_D(E,\hbar) $ are the all-order WKB action and dual ``instantonic'' action. We expect similar P/NP relations for the expansion of dynamical systems in $\omega$, up to change $\hbar\rightarrow i\omega$.

\item The parameters in the dynamical system can be random variables. Using our mapping of the dynamical system to the Hill equation, we get the quantum mechanics with the random potential, which could develop the Anderson localization of the degrees of freedom. It would be interesting to discuss the possible relation of the Anderson localization with the phase-locking in a noisy dynamical system.

\item The dynamical systems describe non-trivial behavior of the RG flows and among many works in this field, we would like to mention \cite{gukov2017rg}, where the generic bifurcations in the RG flows are classified. The cycles in the RG are discussed in \cite{Leclair2003russian,Bulycheva2014spectrum,jepsen2021rg,motamarri2024refined,leclair2025non} and homoclinic orbits are found in \cite{jepsen2021homoclinic}. Finally, chaotic behavior was identified in \cite{bosschaert2022chaotic}. In our study we deal with dynamical systems on torus of M\"{o}bius type and here a Cantor staircase with integer steps appears. This is quite common in systems where the non-perturbative effects dominate in the RG flow.

It would be interesting to consider the ``non-perturbative effects'' in the RG flow corresponding to canard-like phenomenon in the dynamical system. To this end, the complexification of the RG time is required. In a holographic setting, the RG flow corresponds to the evolution along the radial coordinate in the hyperbolic geometry \cite{de2000holographic}, which plays the role of RG time. The canard-like phenomenon corresponds to the instanton-like jump between the opposite points of the limit cycle. The non-perturbative ``RG instanton'' along the imaginary RG time presumably corresponds to the effective tunneling between the IR and UV domains. The low-dimensional QM/AdS$_2$ holography hopefully provides a suitable environment for investigating this phenomenon.

\item The IQHE effect is a familiar example when the plateau regimes correspond to the quantized first Chern class for the Berry curvature in the parameter space. Our study suggests a more general view on this issue. The dynamical system on the torus considered as the particular example of RG flow  enjoys the regions with the quantized Poincar\'{e} rotation number -- these are exactly the plateaus on the Cantor staircase. It would be interesting to elaborate if the quantized Poincar\'{e} rotation number substitutes the quantized first Chern number for the Berry curvature in the description of the plateau in general RG flow. This line of reasoning is supported by our identification of the Poincar\'{e} rotation number with the Hanney angle in the parametric oscillator picture. Recall that the Hanney angle is the classical counterpart of the Berry phase in the QM.

\item Using the mapping of the dynamical system on the torus to the quantum mechanical problem, we have argued that the canard-like phenomena most probably can be interpreted as a complex time instanton upon the complexification of the slow manifold in the slow-fast limit. On the other hand, they correspond to the conventional instantons in QM hence the summation of the instantons and generic resurgence of the series in $\hbar$ , see, for instance, \cite{cherman2015decoding,basar2013resurgence,aniceto2011resurgence,codesido2019non} can be developed in the dynamical system as well.

\item Generally, the phase-locking domains corresponds to the rational rotation numbers. In the RG context, the fractional plateau is well known for the FQHE where the fractional Cantors staircase for $\sigma_{xy}$ emerges from the interacting quasiparticles instead of the non-interacting quasiparticles for IQHE. It would be interesting to recognize the origin of the fractional rotation number in the framework of the Hill operator and coadjoint Virasoro orbits.

\item In IQHE and other examples \cite{flack2023generalized} the variable that exhibits quantized plateau behavior is related to Berry's curvature,
which classical counterpart we have connected to the rotation number. The Berry curvature is known to be the imaginary part of the quantum tensor, while the real part of the quantum tensor defines the quantum metrics on the parameter space \cite{provost1980riemannian}. It would be interesting to recognize the partner of the Poincar\'{e} rotation number that together forms the modular parameter. The related question concerns the role of the Zak phase \cite{zak1989berry} in the analysis considered.

\item As noted in \cite{klimenko2013asymptotic}, the boundaries of phase-locking domains can be approximated by the Bessel functions in the large $A$ limit. This implies that the constrictions correspond to the zeros of Bessel functions. This observation is interesting from the Virasoro group point of view. Extending the discussion of the Virasoro group, it is known that any symplectic manifold involving the Virasoro coadjoint orbit can be quantized via geometric Kirillov-Kostant quantization. The orbits $\mathrm{Diff}(S^1)/S^1$ yield the Verma module of the Virasoro algebra,  $\mathrm{Diff}(S^1)/\mathrm{SL}^n(2,\mathbb{R})$ yields the Verma module with null vector at $n$-the level. Quantization of the $T_{n,\Delta}$, $T_{n,\pm}$ orbits is not yet known. Hence in general setup with the Cantor staircase picture we expect that the candidate theories for description of plateau are the quantization of the special hyperbolic Virasoro orbits $T_{\delta,n}$ for the $n$-th step on the staircase. On the other hand, the candidate theories for transitions between the steps on the staircase, for instance, in the IQHE case, are described by the CFT, which are quantization of $T_{\alpha,0}$ orbits and $T_{\pm,n}$ at the boundaries of the phase-locking domains. 

\item The constriction points are related to the zeros of the Bessel function corresponding to the fixed values of $A$ in the RSJ model. Therefore, the representatives of the Virasoro orbits are also related to the zeros of the Bessel function. We know that upon quantization the representatives yield the conformal weights for the particular operators. On the other hand, in the holographic picture the zeros of the Bessel functions indeed are related to the conformal weights or
equivalently to the eigenvalues of the bulk equations of motion. It would be interesting to explore the holographic viewpoint on the constriction points.

\item Finally, the dynamical systems on the torus can be linked to the different conformal blocks in the Liouville theory via the Heun equation. On the other hand, such conformal blocks are related via AGT correspondence \cite{alday2010liouville} to the Nekrasov partition functions for SUSY YM theories with different number of flavors; see, for instance, \cite{litvinov2014classical,lisovyy2022perturbative,zenkevich2011nekrasov}. The same Heun equations describe also the quasinormal modes around the black holes \cite{aminov2022black,aminov2023black,bonelli2022exact,bonelli2023irregular}. We will discuss the interplay between the dynamical systems on the torus, $\mathcal{N}=2$ SQCD and quasinormal modes around the black holes in a separate publication \cite{ags}.

\end{itemize}

\section{Conclusion}\label{sec:Conclusion}

In this study, we elaborate the phase-locking phenomenon in the nonlinear classical dynamical system on 2D torus by mapping it to the properties of the Hill operator which provides either the equation of motion for the parametric oscillator, or the Schr\"{o}dinger equation. In the parametric oscillator interpretation, irrational  Poincar\'{e} rotation number is directly related to the nonadiabatic Hannay angle. On the other hand, interpretation in terms of the Schr\"{o}dinger equation, allowed us to obtain a framework for the derivation of the positions and the widths of the phase-locking areas. Since the eigenfunctions of the Hill operator govern the classification of the coadjoint orbits, we derive the classification of the phase-locking domains and its boundaries in terms of the Virasoro coadjoint orbits. The so-called special coadjoint orbits parameterized by a pair of parameters are the key place. An integer parameter is identified with the Poincar\'{e} rotation number, while the second one is an analogue of the Hannay angle in the hyperbolic monodromy region.

The mapping between the quantum mechanics and dynamical systems opens the possibility of applying the conventional quantum mechanical tools to quantify the different phenomena in the dynamical systems treated equivalently as examples of the RG flows. We have shown that the quantization of the Poincar\'e rotation number corresponds to the exact WKB quantization in quantum mechanics with the periodic potential. The semiclassical approximation corresponds to the slow-fast dynamical system, and we have argued that the ``RG instantons'' or cannards can be formulated in terms of the complexified slow manifold of a dynamical system. This viewpoint provides the possibility to formulate the universal RG viewpoint for several phenomena in which Heun-type equations emerge naturally, suggesting some additional tool to develop a holographic picture for them.

\section*{Acknowledgements}

The authors thank Victor Buchshtaber and Anton Gerasimov for useful discussions. Alexander Gorsky thanks the Institut Mittag-Leffler for hospitality and support during the program ``Cohomological aspects of the quantum field theory''. Artem Alexandrov was supported by the Foundation for the Advancement of Theoretical Physics and Mathematics ``BASIS'' (grant No. 23-1-5-41-1) and the IDEAS Research Center. Artem Alexandrov is grateful to Yakov Fominov for giving attention to the literature related to Shapiro steps and to Ivan Mamaev for pointing out the appearance of staircase structures in mechanical problems. Alexey Glutsyuk was supported by the Foundation for the Advancement of Theoretical Physics and Mathematics ``BASIS'' (grant No. 24-7-1-15-1) and by the MSHE project No. FSMG-2024-0048.

\section*{Numerical simulation details}

All numerical simulations were performed in the {\julia} programming language. For a numerical solution of differential equations, we use the package \verb|DifferentialEquations|. For Hamiltonian dynamics, the symplectic integrators were used. For the computation of Lyapunov exponents, we use the approach described in~\autoref{app:Layp-Rho} and use the default integrator.

\appendix

\section{Rotation number \& number of zeros}\label{app:Hill-Rho}

Here we provide the proof for the theorem formulated in~\cite{johnson1982rotation} that relates the number of Hill equation solutions zeros and the rotation number of a dynamical system on torus (see~\autoref{subsec:DynTorusHill}). Its proof is similar the one given in~\cite{johnson1982rotation}.

\begin{theorem} \label{th:Hill-zeros}
    Consider a Hill equation of the form: 
    \begin{equation}\label{eq:app-Hill-general}
        \ddot{u}+g_1(\tau)\dot{u}+g_2(\tau)u=0,
    \end{equation}
    where $g_1(\tau)$ and $g_2(\tau)$ are real-valued $2\pi$-periodic functions. Consider the corresponding dynamical system,
    \begin{equation}\label{eq:app-Hill-torus-dynsys}
        \frac{d\xi}{d\tau}=\frac{g_2(\tau)+1}2+\frac{g_2(\tau)-1}2\cos2\xi-\frac{g_1(\tau)}2\sin2\xi.
    \end{equation}
    on the torus $\mathbb T^2$. Then for every solution $u(\tau)$ of the Hill equation, the number $N(T)$ of its zeros on the interval $[0,T)$ is related to the rotation number of the dynamical system on torus~\eqref{eq:app-Hill-torus-dynsys} as
    \begin{equation}
        \rho = \pi\lim\limits_{T\rightarrow\infty}\frac{N(T)}{T}.
    \end{equation}
\end{theorem}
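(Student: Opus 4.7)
The plan is to apply the Pr\"ufer transformation that the excerpt has already introduced and then exploit one transversality identity. First I would fix a non-trivial solution $u(\tau)$ of \eqref{eq:app-Hill-general}, set $R(\tau)=\sqrt{u(\tau)^2+\dot u(\tau)^2}$ and $\xi(\tau)=\arg(u-i\dot u)$. By uniqueness for the first-order linear system \eqref{eq:Hill-LinearSys}, $R$ never vanishes, so $\xi$ admits a smooth lift $\xi^{*}\colon[0,\infty)\to\mathbb{R}$. Differentiating $u=R\cos\xi$, $\dot u=-R\sin\xi$ and substituting into \eqref{eq:app-Hill-general} yields exactly the flow \eqref{eq:app-Hill-torus-dynsys} for $\xi$; this is the standard Pr\"ufer computation, which I would write out to fix conventions.

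The crucial observation is the following. The condition $u(\tau)=0$ is equivalent to $\cos\xi(\tau)=0$, i.e.\ $\xi(\tau)\in\tfrac{\pi}{2}+\pi\mathbb{Z}$. Substituting $\xi=\tfrac{\pi}{2}+k\pi$ into the right-hand side of \eqref{eq:app-Hill-torus-dynsys} gives $\cos 2\xi=-1$ and $\sin 2\xi=0$, so the $g_1$ and $g_2$-dependent terms collapse and one obtains $\dot\xi=1>0$, independently of $\tau$, $g_1$ and $g_2$. Hence every zero of $u$ is an upward transverse crossing of one of the levels $\xi=\tfrac{\pi}{2}+k\pi$, and $\xi^{*}$ can never descend back through such a level, since that would force $\dot\xi\le 0$ at a point where $\dot\xi=1$. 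Consequently, consecutive zeros of $u$ correspond to consecutive critical levels, and between two consecutive zeros $\xi^{*}$ gains exactly $\pi$, even though $\xi^{*}$ need not be globally monotone.

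Combining these two steps, if $N(T)$ counts the zeros of $u$ in $[0,T)$, then $\xi^{*}(T)$ and $\xi^{*}(0)+\pi N(T)$ lie in the same open interval of length $\pi$ with endpoints in $\tfrac{\pi}{2}+\pi\mathbb{Z}$, hence $|\xi^{*}(T)-\xi^{*}(0)-\pi N(T)|<\pi$. Dividing by $T$ and invoking the existence and initial-condition independence of the rotation number $\rho=\lim_{T\to\infty}\xi^{*}(T)/T$ of \eqref{eq:app-Hill-torus-dynsys} recalled in Section~\ref{sec:Torus-Dyn}, I would conclude $\rho=\pi\lim_{T\to\infty}N(T)/T$ and simultaneously obtain the existence of the limit on the right-hand side.

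The main obstacle, as I see it, is the non-monotonicity of $\xi^{*}$: the right-hand side of \eqref{eq:app-Hill-torus-dynsys} can certainly change sign for $\xi$ away from $\tfrac{\pi}{2}+\pi\mathbb{Z}$, so one cannot naively read the rotation number off as the density of level crossings. The identity $\dot\xi\equiv 1$ on $\{\xi\in\tfrac{\pi}{2}+\pi\mathbb{Z}\}$ is precisely what rescues the argument, acting as a one-way barrier that forces the critical levels to be traversed in strictly increasing order and turning the local oscillations of $\xi^{*}$ into a harmless decoration inside each interval. The only remaining nuisance is the treatment of partial intervals at $\tau=0$ and $\tau=T$, but these contribute $O(1)$ and disappear after normalization by $T$.
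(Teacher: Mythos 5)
Your proposal is correct and follows essentially the same route as the paper's Appendix~A proof: both pass to the Pr\"ufer phase $\xi=\arg(u-i\dot u)$, use transversality of the flow at the zero set of $u$ (you compute $\dot\xi=1$ on the levels $\xi\in\tfrac{\pi}{2}+\pi\mathbb{Z}$, the paper argues the same one-way crossing of the $v$-axis geometrically from $\dot u=v\neq0$) to get an exact increment of $\pi$ between consecutive zeros, and then divide by $T$ with an $O(1)$ boundary correction. Your explicit identity $\dot\xi\equiv1$ at the critical levels is a slightly cleaner way of packaging the paper's Proposition~\ref{prop:Hill-Zeros}, but it is the same argument.
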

We can represent the eq.~\eqref{eq:app-Hill-general} as linear system,
\begin{equation}\label{eq:app-Hill-linearsys}
\begin{cases}
    \dot{u} = v, \\
    \dot{v} = -g_2(\tau)u-g_1(\tau)v.
\end{cases}
\end{equation}
Let us give an equivalent statement concerning the relation of the rotation number and zeros of solutions to the Hill equations, in terms of projectivization, using the symmetry $(u,v)\mapsto(-u,-v)$, or in other terms, the symmetry $\xi\mapsto\xi+\pi$ of the corresponding dynamical system on $\mathbb T^2$. Passing to projectivization of the system~\eqref{eq:app-Hill-linearsys} via tautological projection $\mathbb{R}^2_{u,v}\to\mathbb{RP}^1_{[u:v]}$ and treating $\mathbb{RP}^1$ as a circle equipped with the coordinate $\zeta=2\xi$ produces another dynamical system on the quotient torus $\mathbb T^2=S^1_\zeta\times S^1_{\tau}$. The rotation number of the latter dynamical system is clearly equal to $\wt\rho=2\rho$. Theorem \ref{th:Hill-zeros} reformulated for the projectivization states that  
\begin{equation}\label{eq:app-Rho-Limit}
    \wt\rho=2\pi\lim_{T\to+\infty}\frac{N(T)}T
\end{equation}

\begin{proof} {\bf of Theorem \ref{th:Hill-zeros}.} The first step of the proof is the following proposition.

\begin{proposition}\label{prop:Hill-Zeros}
Every two neighbor zeros $\tau_1<\tau_2$ of a solution $u(\tau)$ of the Hill equation~\eqref{eq:app-Hill-general} correspond to two neighbor intersection points of the solution curve $(u(\tau),v(\tau))_{\tau>0}$ of the corresponding system~\eqref{eq:app-Hill-linearsys} with the $v$ axis, and their $v$ coordinates are opposite. The increment in the argument $\xi(\tau_2)-\xi(\tau_1)$ between the two consecutive zeros is equal to $\pi$.
\end{proposition}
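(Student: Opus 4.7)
The plan is to analyze the trajectory $(u(\tau),v(\tau))$ of the linear system~\eqref{eq:app-Hill-linearsys} at a zero of $u$, and track the angular coordinate $\xi=\arg(u-iv)$ using the dynamical equation~\eqref{eq:app-Hill-torus-dynsys}. The key observations are that zeros of $u$ correspond exactly to transverse crossings of the $v$-axis, and that $\xi$ passes through the values $\pm\pi/2$ monotonically, independently of $g_1,g_2$.

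For the first half of the proposition, suppose $\tau_1<\tau_2$ are two consecutive zeros of $u(\tau)$. Since the zero solution is excluded by uniqueness, $v(\tau_j)\neq 0$, so the trajectory crosses the $v$-axis transversally at $\tau_1$ and $\tau_2$. On the open interval $(\tau_1,\tau_2)$ the function $u$ has constant sign, say $u>0$; then the trajectory remains in the half-plane $\{u>0\}$ and performs no other $v$-axis crossing, so consecutive zeros of $u$ correspond to consecutive $v$-axis crossings. Using $\dot u = v$, the sign of $v$ at each crossing equals the one-sided sign of $\dot u$: at $\tau_1$ one has $\dot u(\tau_1)\geq 0$ and at $\tau_2$ one has $\dot u(\tau_2)\leq 0$, both strict by uniqueness. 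Hence $v(\tau_1)>0>v(\tau_2)$, i.e.\ the $v$-coordinates are opposite. (The case $u<0$ on $(\tau_1,\tau_2)$ is handled by the symmetry $(u,v)\mapsto(-u,-v)$.)

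For the angular increment, at $\tau_1$ the point $u-iv$ lies on the negative imaginary axis, so $\xi(\tau_1)\equiv-\pi/2\pmod{2\pi}$; at $\tau_2$ it lies on the positive imaginary axis, so $\xi(\tau_2)\equiv\pi/2\pmod{2\pi}$. To pin down the lift and exclude jumps by multiples of $2\pi$, note that $u>0$ on $(\tau_1,\tau_2)$ forces the continuous lift $\xi(\tau)$ to stay in a single interval $(-\pi/2+2\pi k,\,\pi/2+2\pi k)$. It remains to check that $\xi$ is increasing through each crossing, so that the two endpoints belong to the same branch $k$. Substituting $\xi=\pm\pi/2$ in~\eqref{eq:app-Hill-torus-dynsys} gives $\cos 2\xi=-1$, $\sin 2\xi=0$, whence
\begin{equation*}
\left.\frac{d\xi}{d\tau}\right|_{\xi=\pm\pi/2} = \frac{g_2(\tau)+1}{2}-\frac{g_2(\tau)-1}{2}=1>0,
\end{equation*}
independently of $\tau$ and of the coefficients $g_1,g_2$. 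Consequently the lift of $\xi$ is strictly increasing at every $v$-axis crossing, and therefore
\begin{equation*}
\xi(\tau_2)-\xi(\tau_1)=\Bigl(\tfrac{\pi}{2}+2\pi k\Bigr)-\Bigl(-\tfrac{\pi}{2}+2\pi k\Bigr)=\pi.
\end{equation*}

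The only genuine subtlety here is the branch choice for the lift of $\xi$; it is resolved cleanly by the unconditional identity $\dot\xi=1$ at $\xi=\pm\pi/2$, which makes the line $\{u=0\}$ on the torus a transverse section uniformly crossed in the positive direction. Everything else is routine transversality and sign bookkeeping, so no further technical obstruction arises.
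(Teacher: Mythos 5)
Your proof is correct and follows essentially the same route as the paper: identify zeros of $u$ with transversal crossings of the section $\{u=0\}$ at $\xi=\pm\pi/2$, note the opposite signs of $\dot u$ at consecutive zeros, and show the section is always crossed in the positive direction so the lift of $\xi$ gains exactly $\pi$. The only cosmetic difference is that you establish the crossing direction by substituting $\xi=\pm\pi/2$ into~\eqref{eq:app-Hill-torus-dynsys} to get $\dot\xi=1$, whereas the paper reads it off from the sign of $\dot u=v$ near the crossing; these are the same fact, and your explicit treatment of the branch of the lift is a welcome extra precision.
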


\begin{proof} At two neighbor zeros of a solution $u(\tau)$ its derivatives obviously have different signs (Rolle-like theorem). Consider a zero where the derivative is positive, i.e., $u=0$ and  $v>0$. Its projection to the $\xi$-circle is its intersection  with the negative ordinate semi-axis, i.e., the point $\xi=-\pi/2$. The projection to the $\xi$-circle of the orbit of the dynamical system~\eqref{eq:app-Hill-torus-dynsys} starting at the corresponding point goes to the right, since $\dot{u}>0$ and $\xi=\arg(u-iv)=\arg(u-i\dot{u})$. Similarly, each zero where the derivative is negative corresponds to intersection with the upper ordinate semi-axis and going to the left. Thus, the increment of $\xi$ between each two neighbor zeros is equal to $\pi$. This implies the first statement of the proposition, which in its turn implies the second.
\end{proof}

Consider the definition of the rotation number as the limit, as $T\to+\infty$, of the mean argument increment
\begin{equation*}
    \rho_T=\frac{\xi(T)-\xi(0)}{T}
\end{equation*}
along a solution $(u(\tau),v(\tau))$ of system~\eqref{eq:app-Hill-linearsys}. Then the number of intersections of the solution curve $(u(\tau),v(\tau))$, $\tau\in[0,T]$, with the $v$-axis (i.e., the number of zeros of $u(\tau)$ in $[0,T]$) is equal to $\left[(\xi(T)-\xi(0))/\pi\right]$ up to adding
a number of moduli not greater than $2$. This follows from Proposition~\ref{prop:Hill-Zeros}. 
Therefore,
\begin{equation*}
    \rho=\lim_{T\rightarrow+\infty}\frac{\xi(T)-\xi(0)}T=\pi\lim_{T\rightarrow+\infty}\frac{N(T)}T.
\end{equation*}
This proves~\eqref{eq:app-Rho-Limit}.
\end{proof}

\section{Phase-locking and Lyapunov exponents}\label{app:Layp-Rho}

Let us remind that we start with the following Hill equation,
\begin{equation}
    \ddot{u}+g_1(t)\dot{u}+g_2(t)u=0,
\end{equation}
where $g_1(t)$ \& $g_2(t)$ are periodic functions with period $2\pi$. Standard radial projection to the unit circle gives the dynamical system~\eqref{eq:app-torus-dyn} on torus with phase variable $\xi(t)=\arg(u(t)+i\dot{u}(t))$: 
\begin{equation}\label{eq:app-torus-dyn}
    \dot{\xi} =  \frac{1-\cos 2\xi}{2} + \frac{g_2}{2}\left(\cos 2\xi+1\right) - \frac{g_1}{2}\sin 2\xi.
\end{equation}   
Then we introduce the variable $\phi=2\xi$ and get 
\begin{equation}
    \dot{\phi} = \left(g_2+1\right) + \left(g_2-1\right)\cos\phi - g_1\sin\phi.
\end{equation}
Now we pass to complex phase  by setting $\Phi=\exp(i\phi)$: 
\begin{equation}\label{eq:app-torus-dyn-Phi}
    \dot{\Phi} = i\left(g_2+1\right)\Phi + \frac{i(g_2-1)}{2}\left(\Phi^2+1\right) - \frac{g_1}{2}\left(\Phi^2-1\right).
\end{equation}
It is easy to see that the dynamics of the complex phase $\Phi=\Phi(t)$ is nothing more than a family of homeomorphisms of the unit circle $S^1\rightarrow S^1$. This means that such dynamics can be represented by the M\"{o}bius transformation that preserves the unit disk with its boundary. This M\"{o}bius transformation can be represented as
\begin{equation}\label{eq:mobius-transform}
    \mathcal{M}(\Phi) = \zeta\frac{\Phi-w}{1-\overline{w}\Phi},\quad |\zeta|=1,\,|w|<1.
\end{equation}
Such M\"{o}bius transformations form 3D Lie group $G\simeq \mathrm{PSL}(2,\mathbb{R})$. There are three canonical one-parametric families of these transformations that can be defined by setting two of the three parameters to zero. Differentiation of these families gives infinitesimal generators of $G$,
\begin{equation}
    v_1 = i\Phi,\quad v_2=\Phi^2-1,\quad v_3=i\Phi^2+i.
\end{equation}
We see that the right-hand side of Eq.~\eqref{eq:app-torus-dyn} is the linear combination of these generators with coefficients that depend only on time. It means that we can write down the dynamics as
\begin{equation}\label{eq:mobius-dynamics}
    \Phi(t) = \zeta(t)\frac{\Phi_0-w(t)}{1-\overline{w}(t)\Phi_0},\quad \Phi_0=\exp(i\phi_0).
\end{equation}
Computing the time derivative and using the expression for inverse M\"{o}bius transformation, we find
\begin{equation}\label{eq:mobius-dyn-der}
    \dot{\Phi}=-\frac{\dot{w}\zeta}{1-|w|^2}+\left(\dot{\zeta}\overline{\zeta}+\frac{\dot{\overline{w}}w-\dot{w}\overline{w}}{1-|w|^2}\right)\Phi+\frac{\dot{\overline{w}}\overline{\zeta}}{1-|w|^2}\Phi^2.
\end{equation}
Indeed, differentiating~\eqref{eq:mobius-dynamics} in time yields
\begin{multline}\label{eq:mob-dyn-smplf}
    \dot{\Phi}=\dot{\zeta}\frac{\Phi_0-w}{1-\overline{w}\Phi_0}+\zeta\left(-\frac{\dot{w}}{1-\overline{w}\Phi_0}\right)-\zeta\left(\frac{\Phi_0-w}{(1-\overline{w}\Phi_0)^2}\right)\left(-\dot{\overline{w}}\Phi_0\right)=\\=\dot{\zeta}\overline{\zeta}\Phi-\left(\frac{\dot{w}}{\Phi_0-w}\right)\Phi+\overline{\zeta}\dot{\overline{w}}\left(1+\frac{w}{\Phi_0-w}\right)\Phi^2.
\end{multline}
Then we invert the relation~\eqref{eq:mobius-transform},
\begin{equation*}
    \Phi_0=\frac{\Phi+\zeta w}{\zeta+\overline{w}\Phi},\quad \frac{1}{\Phi_0-w}=\frac{\zeta+\overline{w}\Phi}{\Phi\left(1-|w|^2\right)}
\end{equation*}
Expressing the terms in~\eqref{eq:mobius-dyn-der} by using the above formulas yields
\begin{equation}\label{eq:mobius-exp-simplified}
\begin{gathered}
    -\left(\frac{\dot{w}}{\Phi_0-w}\right)\Phi=-\dot{w}\Phi\cdot\frac{\zeta+\overline{w}\Phi}{\Phi(1-|w|^2)}=-\frac{\dot{w}\zeta}{1-|w|^2}-\frac{\dot{w}\overline{w}}{(1-|w|^2)},\\
    \frac{\overline{\zeta}\dot{\overline{w}}w}{\Phi_0-w}\Phi^2=\overline{\zeta}\dot{\overline{w}}w\Phi^2\cdot \frac{\zeta+\overline{w}\Phi}{\Phi(1-|w|^2)}=\frac{\dot{\overline{w}}w}{1-|w|^2}\Phi+\frac{\overline{\zeta}|w|^2\dot{\overline{w}}}{1-|w|^2}\Phi^2,\\
    \overline{\zeta}\dot{\overline{w}}\left(1+\frac{w}{\Phi_0-w}\right)\Phi^2=\frac{\dot{\overline{w}}w}{1-|w|^2}\Phi+\frac{\dot{\overline{w}}\overline{\zeta}}{1-|w|^2}\Phi^2.
\end{gathered}
\end{equation}
Substituting the formulas from~\eqref{eq:mobius-exp-simplified} into equation~\eqref{eq:mob-dyn-smplf} yields~\eqref{eq:mobius-dyn-der}. Next, it is convenient to define the functions
\begin{equation}
    g_{+}(t) = \frac{g_1(t)}{2}+\frac{i(g_2(t)-1)}{2},\quad g_{-}(t) = -\frac{g_1(t)}{2}+\frac{i(g_2(t)-1)}{2}.
\end{equation}
Matching the right hand sides of  equations~\eqref{eq:app-torus-dyn-Phi} and~\eqref{eq:mobius-dyn-der} we find
\begin{equation}
    \frac{\dot{w}}{1-|w|^2} = -g_{+}\overline{\zeta},\quad \frac{\dot{\overline{w}}}{1-|w|^2} = g_{-}\zeta,
\end{equation}
\begin{equation}
    \dot{\zeta}=i(g_2+1)\zeta-g_{-}w\zeta^2-g_{+}\overline{w}.
\end{equation}
Therefore, we have system of two equations for parameters $w=w(t)$ and $\zeta=\zeta(t)$,
\begin{equation}
    \dot{w}=-\left(1-|w|^2\right)g_{+}\bar\zeta,\quad \dot{\zeta}=-i(g_2+1)\zeta-g_{-}w\zeta^2-g_{+}\overline{w}.
\end{equation}
The initial conditions for these equations are known. In fact, from Eq.~\eqref{eq:mobius-dynamics} we have $w(0)=0$, $\zeta(0)=1$. With knowledge about the M\"{o}bius dynamics, we can write down the following expression for the Poincar\'{e} map,
\begin{equation}
    \phi(2\pi)=\im\log\mathcal{P}\left(e^{i\phi_0}\right),\quad \mathcal{P}(\bullet)=\zeta(2\pi)\frac{\bullet-w(2\pi)}{1-\overline{w}(2\pi)\bullet}.
\end{equation}
This means that the Poincar\'{e} map is nothing more than the M\"{o}bius transformation with time-dependent coefficients taken at time $t=2\pi$. We conclude that the Poincar\'{e} map defines the discrete dynamical system,
\begin{equation}
    \Phi_{n+1}=\mathcal{P}(\Phi_n).
\end{equation}
By definition the Lyapunov exponent for discrete dynamical system is
\begin{equation}
    \Lambda = \lim\limits_{n\rightarrow\infty}\frac{1}{n}\sum_{i=0}^{n-1}\ln\left|\mathcal{P}'(\Phi_i)\right|=\ln\left|\mathcal{P}'\left(\Phi^*\right)\right|,
\end{equation}
where $\Phi^*$ is the attracting (or neutral in the parabolic case) fixed point of map $\mathcal{P}$. The above limit exists for every initial condition $\Phi_0$. 

In the theory of M\"{o}bius transformations, the derivatives of a M\"{o}bius transformation at fixed points are called its multipliers. The general M\"{o}bius transformation can have two fixed points, so we generally have two multipliers, $\mu_1$ and $\mu_2$. The multipliers obey the relation $\mu_1\mu_2=1$ and can be expressed through eigenvalues $\lambda_1$, $\lambda_2$ of  M\"{o}bius transformation matrix: $\mu_1=\lambda_2/\lambda_1$, $\mu_2=\mu_1^{-1}$.
The M\"{o}bius transformations preserving the unit disk and the unit circle can be classified by the trace of the M\"{o}bius transformation matrix $M$ normalized by scalar factor to have unit determinant, which has the form:
\begin{equation}
    M=\frac{1}{\sqrt{|a|^2-|b|^2}}\begin{pmatrix}a & b \\ \overline{b} & \overline{a} \end{pmatrix},\quad a,\,b\in\mathbb{C}, \ \ |a|^2-|b|^2>0.
\end{equation}
The latter matrix has real trace. If $0<|\tr M|<2$, the transformation is called \emph{elliptic}, its fixed points lie outside the unit circle, the multipliers are conjugated, $\mu_1=\overline{\mu_2}$, and the transformation is conjugated to the rotation of the unit circle. If $|\tr M|>2$, the transformation is called \emph{hyperbolic}, it has two fixed points on the unit circle, the multipliers are real, $\mu_1>1$ and $\mu_2<1$. If $|\tr M|=2$ and the transformation is not identity, it is called \emph{parabolic}, it has one fixed point on the unit circle, and $\mu_1=\mu_2=1$.

We perform the following procedure. First of all, the normalized transformation $\mathcal{P}$ has the following matrix form,
\begin{equation}
    M = \frac{1}{\sqrt{\zeta(1-|w|^2)}}\begin{pmatrix}\zeta & -w\zeta \\ 
    -\overline{w} & 1\end{pmatrix}
\end{equation}
with following eigenvalues,
\begin{equation}
    \lambda_{\pm}=\frac{1+\zeta\pm\sqrt{(1-\zeta)^2+4\zeta|w|^2}}{2\sqrt{\zeta(1-|w|^2)}},
\end{equation}
which gives the explicit expressions for multipliers $\mu_1=\mu$, $\mu_2=1/\mu$
\begin{equation}
    \mu = \frac{1+\zeta+\sqrt{(1-\zeta)^2+4\zeta|w|^2}}{1+\zeta-\sqrt{(1+\zeta)^2+4\zeta|w|^2}}.
\end{equation}
Second, we numerically solve the equations for $\zeta=\zeta(t)$ and $w=w(t)$ and find $\zeta(2\pi)$ and $w(2\pi)$. It allows us compute multipliers $\mu_1$ \& $\mu_2$ as the functions of parameters $A$, $B$ and $\omega$. Finally, we compute the Lyapunov exponent $\Lambda=\ln|\mu|$. This explains why in the case of RSJ model the phase-locking occurs for the hyperbolic element of $\mathrm{SL}(2,\mathbb{R})$: in such a case the non-trivial fixed point on the unit circle $S^1$ exists. So, in the sense of the RSJ model, the hyperbolic monodromy of the Hill equation is stable, which means that the phase-locking occurs.

\section{Invariant combinations of Hill equation solutions}\label{app:Inv-Virasoro}

Consider the following Hill equation,
\begin{equation}\label{eq:Hill-Appendix}
    \ddot{u} + V(t)u = 0,\quad V(t+T)=V(t).
\end{equation}
The Wronskian being time-independent, the monodromy is unimodular, i.e., it has unit determinant. We consider that the monodromy is nonparabolic (i.e. not a Jordan cell). Hence, it is diagonalizable with complex inverse eigenvalues. This implies that it preserves a non-degenerate quadratic form, which is given by an off-diagonal matrix in the (real or complex) basis diagonalizing the monodromy.   This means that for every pair of linearly independent solutions $u_1(t)$ and $u_2(t)$ of the eq.~\eqref{eq:Hill-Appendix} there exists a symmetric non-degenerate matrix  $A\in \mathrm{GL}(2,\mathbb{C})$ (real if $V$ is real-valued,  
complex if  $V$ is complex-valued), set
\begin{equation}
    Q(t)=\left\langle AU(t),U(t)\right\rangle, \ \ U(t)=(u_1(t),u_2(t)),
\end{equation}
such that $\forall t$ $Q(t+T)=Q(t)$. Rescale $u_1$ and $u_2$ by one and the same complex constant factor to obtain basic solutions with unit Wronskian. This fixes the solution basis up to the unimodular linear transformation.

Let us choose a system of linearly independent complex  solutions $u_1(t)$, $u_2(t)$ of~\eqref{eq:Hill-Appendix} diagonalizing the complexified monodromy, normalized to have the unit Wronskian,
\begin{equation}
    u_1(t+T)=e^{\mu}u_1(t),\quad u_2(t+T)=e^{-\mu}u_2(t),\,\,W=u_1\dot{u}_2-u_2\dot{u}_1=1,
\end{equation}
where $\mu\in\mathbb{C}$. Then the quadratic form is given by $u_1u_2$. The following theorem takes place.
\begin{theorem}\label{th:Hill-invariants}
    In the above condition, the exponent $\mu$ can be found by formula \begin{equation}\label{eq:Lambda-Exp-appendix}
        \mu = -\frac{1}{2}\left(\mathrm{p.v.}\,\,\int_{t_0}^{t_0+T}\frac{ds}{u_1(s)u_2(s)}\right).
    \end{equation}
\end{theorem}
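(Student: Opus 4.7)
The natural object to consider is the ratio $v(t):=u_1(t)/u_2(t)$. Because $u_1$ and $u_2$ are Floquet eigenfunctions with eigenvalues $e^{\pm\mu}$, the ratio is a Floquet eigenfunction with eigenvalue $e^{2\mu}$, i.e.
\begin{equation*}
    v(t+T) = e^{2\mu}\,v(t).
\end{equation*}
Therefore any continuous branch of $\ln v$ accumulates an increment of $2\mu$ (modulo $2\pi i\mathbb{Z}$) over one period, so it suffices to compute $\ln v(t_0+T)-\ln v(t_0)$ by integrating the logarithmic derivative.

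The first key step is to use the unit Wronskian $W=u_1\dot u_2-u_2\dot u_1=1$ to compute
\begin{equation*}
    \dot v = \frac{\dot u_1 u_2 - u_1\dot u_2}{u_2^2} = -\frac{W}{u_2^2} = -\frac{1}{u_2^2},
\end{equation*}
which gives the logarithmic derivative
\begin{equation*}
    \frac{d}{dt}\ln v(t) \;=\; \frac{\dot v(t)}{v(t)} \;=\; -\frac{1}{u_1(t)u_2(t)}.
\end{equation*}
Integrating from $t_0$ to $t_0+T$ then yields $\ln v(t_0+T)-\ln v(t_0) = -\int_{t_0}^{t_0+T}ds/(u_1u_2)$, which after dividing by $2$ produces the claimed formula, provided the right-hand side is interpreted in a way compatible with the zeros of $u_1u_2$ on $[t_0,t_0+T]$.

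The remaining step, which is the only delicate one, is to justify the principal value. Whenever $u_1(t_*)=0$, the unit-Wronskian relation forces $\dot u_1(t_*)u_2(t_*)=-1$, so $u_1u_2$ has a simple zero at $t_*$ and the integrand has a simple pole with residue $-1$; symmetrically at zeros of $u_2$ the residue is $+1$. Each such simple pole is integrable in the principal-value sense, since the contribution from a symmetric $\epsilon$-interval $[t_*-\epsilon,t_*+\epsilon]$ vanishes as $\epsilon\to 0$. Meanwhile, on a small deformed contour that indents around $t_*$ in the complex plane, the branch of $\ln v$ jumps by $\pm\pi i$ at a pole/zero of $v$. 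Summing these jumps over all zeros of $u_1u_2$ in one period accounts exactly for the $\pi i\mathbb{Z}$ ambiguity in the definition of $2\mu$, and both sides of~\eqref{eq:Lambda-Exp-appendix} differ by a multiple of $\pi i$, matching the fact that $\mu$ itself is only defined modulo $\pi i\mathbb{Z}$ (because the choice of square root of $e^{2\mu}$ is arbitrary). In the elliptic case with $u_2=\bar u_1$ the denominator $u_1u_2=|u_1|^2$ does not vanish and no principal value is needed; in the hyperbolic case $u_1,u_2$ are real and the zeros are exactly handled as above. This is the main technical point of the argument; once it is in place, the identity~\eqref{eq:Lambda-Exp-appendix} follows.
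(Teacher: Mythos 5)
Your computation is the paper's computation in mirror image: the paper sets $u_2=e^{\phi}u_1$ and uses $W=1$ to get $\dot\phi=1/(u_1u_2)$, you take $v=u_1/u_2$ and get $(\ln v)^{\cdot}=-1/(u_1u_2)$; the Floquet relation then gives the increment $-2\mu$ over a period in both versions. That part is fine. The problem is the last step, which you yourself identify as the only delicate one and then do not actually close. You end by asserting that the two sides of \eqref{eq:Lambda-Exp-appendix} agree only modulo $\pi i\mathbb{Z}$, and you excuse this by claiming $\mu$ is itself only defined modulo $\pi i\mathbb{Z}$ because ``the choice of square root of $e^{2\mu}$ is arbitrary.'' That is not right: $\mu$ is pinned down (modulo $2\pi i\mathbb{Z}$, and exactly in the real hyperbolic case) by the genuine eigenvalue relation $u_1(t+T)=e^{\mu}u_1(t)$, not merely by the multiplier $e^{2\mu}$ of the ratio $v$; shifting $\mu$ by $\pi i$ flips the sign of both monodromy eigenvalues, which is exactly the kind of error the theorem is used to exclude (e.g.\ in deriving $\lambda_{\pm}=e^{\pm i\alpha}$ and the Floquet exponents of the Mathieu equation). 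So as written your argument proves a strictly weaker statement than the theorem.

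The missing ingredient is the cancellation of the residues, and it comes from Sturm interlacing. Since $u_1$ and $u_2$ are monodromy eigenfunctions, their zeros are simple and form $T$-periodic lattices, and by the Sturm separation theorem the zeros of $u_2$ interlace with those of $u_1$; in particular there are equally many zeros of each in one period. Deform $[t_0,t_0+T]$ to a path $\Gamma$ that skirts each zero by a small upper semicircle; along $\Gamma$ your identity $2\mu=-\int_\Gamma ds/(u_1u_2)$ holds with no principal value. Letting the radii shrink, $\int_\Gamma$ tends to the principal value integral plus $\pi i$ times the sum of the residues of $dt/(u_1u_2)$. You correctly computed these residues to be $-1$ at each zero of $u_1$ and $+1$ at each zero of $u_2$; since the two kinds occur in equal numbers per period, the residue contributions cancel exactly and $\int_\Gamma$ equals the principal value integral on the nose. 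This removes the $\pi i$ ambiguity entirely and yields \eqref{eq:Lambda-Exp-appendix} as an exact identity, which is how the paper argues. (There remains a choice of branch of $\ln v$ at the single point $t_0+T$, but that is fixed by analytic continuation along $\Gamma$, which is what the contour argument supplies; your version, which integrates $(\ln v)^{\cdot}$ only over the real segment through the poles, has no mechanism to fix it.)
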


\begin{proof}
    First of all, let us make the following remark. The solutions $u_1$ and $u_2$ may have zeros in the integration interval, it explains the necessity of taking the principal value of the integral. The above integral is an integral of the $T$-periodic function over a period. It is independent on $t_0$ and is treated as a principal value integral over the circle $\mathbb{R}/(T\mathbb {Z})$.

    Let us write $u_2(t)=e^{\phi(t)}u_1(t)$, then the Wronskian is given by
    \begin{equation*}
        W = u_1\dot{u}_2-u_2\dot{u}_1 = e^{\phi(t)}\dot{\phi}(t)u_1^2 = \dot{\phi}(t)u_1(t)u_2(t)=1.
    \end{equation*}
    Performing the integration, we find
    \begin{equation}
        \phi(t) = \phi(t_0) + \int_{t_0}^{t}\frac{ds}{u_1(s)u_2(s)}.
    \end{equation}
    One has
    \begin{equation*}
        u_2(t+T) = e^{\phi(t+T)}u_1(t+T)=e^{\mu+\phi(t+T)}u_1(t) = e^{-\mu}u_2(t) = e^{-\mu+\phi(t)}u_1(t),
    \end{equation*}
    therefore
    \begin{equation*}
        \phi(t+T)-\phi(t)=-2\mu.
    \end{equation*}
    This implies the formula~\eqref{eq:Lambda-Exp-appendix} when $u_1$ and $u_2$ have no zeros.

     Let us now consider the case where one of the solutions, say $u_1$, has zeros. Then its zeros are simple and form a $T$-periodic lattice, since $u_1$ is a monodromy eigenfunction. Therefore, $u_2$ also has zeros intermittent with those of $u_1$ (Sturm Theorem), simple and similarly, they form a $T$-periodic lattice. Denote the zeros of $u_1$ by $a_1,\dots,a_k$, and those of $u_2$ by $b_1,\dots,b_k$, and one has $a_1<b_1<a_2<b_2<\dots$ (up to the exchange of $b_j$ and $a_j$). In this case, the above argument applies along a path $\Gamma$ in $\mathbb{C}_t$ from $t_0$ to $t_0+T$ that avoids zeros of $u_1$, $u_2$. It goes along the real line to the right until it approaches a zero $c$ of some $u_l$, then goes along the semicircle of small radius centered at $c$ in the upper half-plane, then goes along the real line to the right, etc.

    The above argument shows that $\mu$ is equal to the same integral as in~\eqref{eq:Lambda-Exp-appendix}, but taken along the path $\Gamma$ and without a p.v. sign. As the radii of the above semicircles tend to zero, the integral does not change and, on the other hand, tends to the principal value integral in~\eqref{eq:Lambda-Exp-appendix} plus $\pi i$ times the sum of the residues of the form $(u_1(t)u_2(t))^{-1}dt$ at the zeros; the form being taken in a small tubular neighborhood of the circle $S^1=\mathbb{R}/(T\mathbb{Z})$ in the complex domain. At each zero $a_j$ the residue is equal to $-1$, since $\dot{u}_1(a_j)u_2(a_j)=-W=-1$. Similarly, at each zero $b_j$ the residue is equal to 1, since $u_1(b_j)\dot{u}_2(b_j)=W=1$. Finally, the residues cancel out and the integral along the path $\alpha$ is equal to the principle value integral~\eqref{eq:Lambda-Exp-appendix}.
\end{proof}
Consider now the elliptic case, where the monodromy is conjugated to a rotation. We consider that the monodromy is a rotation in some basis of solutions $u_1(t)$ and $u_2(t)$ with unit Wronskian. Thus, the quadratic form $u_1^2(t)+u_2^2(t)$ is $T$-periodic: monodromy invariant. In this case, the complexified monodromy operator diagonalizes in the basis $v_1=(iu_1-u_2)/\sqrt{2}$, $v_2=(u_1-iu_2)/\sqrt{2}$
and has eigenvalues $e^{i\alpha}$ and $e^{-i\alpha}$ respectively, where $\alpha$ is the rotation angle. The basis of solutions $v_1$, $v_2$ has a unit Wronskian, and the function $v_1v_2$ has no zeros, since linearly independent solutions $u_1$ and $u_2$ have no common zeros. The theorem gives us the expression for $\alpha$,
\begin{equation}
    \alpha = \int_{0}^{T}\frac{ds}{u_1(s)^2+u_2(s)^2},
\end{equation}
so the eigenvalues $\lambda_{\pm}$ of the monodromy matrix are given by
\begin{equation}
    \lambda_{\pm} = e^{\pm i\alpha} = \exp\left\{\pm i\int_{0}^{T}\frac{ds}{u_1(s)^2+u_2(s)^2}\right\}.
\end{equation}
In the hyperbolic case, we have
\begin{equation}
    \lambda_{\pm} = e^{\pm \lambda} = \exp\left\{\mp\frac{1}{2}\int_{0}^{T}\frac{ds}{u_1(s)u_2(s)}\right\}
\end{equation}

\section{Hill equation for RSJ model}\label{app:Hill-RSJ}

Here we explicitly show how to obtain the Hill equation from the RSJ model, which was done in \cite{renne1974some}. The phase dynamics in the RSJ model is given by
\begin{equation}
    \dot{\phi} = -\sin\phi + B + A\cos\omega t.
\end{equation}
To find the corresponding Hill equation, we make the series of the following variable changes. First, change the definition of phase by setting $\theta=\phi-\pi/2$,
\begin{equation}
    \frac{d\theta}{dt}+\cos\theta=B+A\cos\omega t.
\end{equation}
Then, introduce the variable $x=\tan(\theta/2)$, which gives the following equation,
\begin{equation}\label{eq:RSJ-Riccati-App}
    \frac{dx}{dt} = f_{+}(t)x^2+f_{-}(t),\quad f_{\pm}=\frac{1}{2}\left(\pm 1+B+A\cos\omega t\right).
\end{equation}
All the solutions of the Riccati equation~\eqref{eq:RSJ-Riccati-App} have the form 
$x=Y_2/Y_1$, where $Y=(Y_1(t),\,Y_2(t))^T$ is a vector solution of linear system
\begin{equation*}
    \frac{d}{dt}\begin{pmatrix}
        Y_1 \\ Y_2
    \end{pmatrix} = 
    \begin{pmatrix} 0 & -f_+(t)\\ f_-(t) & 0\end{pmatrix}
    \begin{pmatrix}
        Y_1 \\ Y_2
    \end{pmatrix}.
\end{equation*}
Linear change $Y_2\mapsto\wt Y_2=-f_+(t)Y_2$ transforms the above system to the system
\begin{equation*}
    \frac{d}{dt}\left(\begin{matrix} Y_1 \\ \wt Y_2\end{matrix}\right)
    =\left(\begin{matrix} 0 & 1 \\
    -f_+(t)f_-(t) & d(\ln f_+(t))/dt \end{matrix}\right)\left(\begin{matrix} Y_1\\ \wt Y_2\end{matrix}\right),
\end{equation*}
which is equivalent to the second order differential equation
\begin{equation*}
    \frac{d^2Y_1}{dt^2}-\frac{d\ln f_+(t)}{dt}\frac{dY_1}{dt}+f_+(t)f_-(t)Y_1=0.
\end{equation*}
The gauge transformation $Y_1(t)\mapsto u(t)=Y_1(t)/\sqrt{f_+(t)}$ sends it to the Hill equation
\begin{equation}
    \frac{d^2u}{dt}+V(t)u(t)=0,
\end{equation}
where the potential $V=V(t)$ is given by
\begin{equation}
   V(t)=f_{+}(t)f_{-}(t)-\frac{3}{4}\left(\frac{1}{f_{+}}\frac{df_{+}}{dt}\right)^2+\frac{1}{2f_{+}}\frac{d^2f_{+}}{dt^2}.
\end{equation}
To find explicit form of the potential, let us calculate its second and third terms denoting
\begin{equation*}
    \frac{3}{4}\left(\frac{\dot{f}_{+}(t)}{f_+(t)}\right)^2=
    \frac34\frac{\omega^2a^2\sin^2(\omega t)}{(1+a\cos(\omega t))^2},\ \ 
    \frac{\ddot{f}_{+}(t)}{2f_+(t)}=-\frac{\omega^2a\cos(\omega t)}{2(1+a\cos(\omega t))},\quad a=\frac{A}{B+1}.
\end{equation*}
This yields 
\begin{equation}
  V(t) = \frac{1}{4}\left\{(B+A\cos\omega t)^2-1\right\}-\frac{\omega^2}{4}\frac{2a^2+2a\cos\omega t + a^2\sin^2\omega t}{(1+a\cos\omega t)^2},\quad a=\frac{A}{B+1}.
\end{equation}
This form is convenient to see the limiting cases,
\begin{equation}
   \lim\limits_{A\rightarrow 0}V(t) = \frac{B^2-1}{4},\quad \lim\limits_{\omega\rightarrow 0} V(t) = \frac{(B+A)^2-1}{4}.
\end{equation}
Note that in the limit $A\ll 1$ with $\omega\rightarrow 0$, the potential $V(t)$ becomes
\begin{equation}
   V(t) = \frac{B^2-1}{4}+\frac{AB}{2}\cos\omega t-\frac{A\omega^2}{2(B+1)}+o(A\omega^2).
\end{equation}

{
\footnotesize
\bibliography{ref.bib}
}

\end{document}